\theoremstyle{plain}
\newtheorem{theorem}{Theorem}[section]
\newtheorem{proposition}[theorem]{Proposition}
\theoremstyle{definition}
\newtheorem{definition}[theorem]{Definition}
\newtheorem{example}[theorem]{Example}
\newtheorem{remark}[theorem]{Remark}
\title{What Makes a Strong Monad?}
\author{%
  Dylan McDermott
  \institute{Reykjavik University, Iceland}
  \email{dylanm@ru.is}
  \and
  Tarmo Uustalu
  \institute{Reykjavik University, Iceland}
  \institute{Tallinn University of Technology, Estonia}
  \email{tarmo@ru.is}}
\DeclareFontFamily{U}{MnSymbolA}{}
\DeclareSymbolFont{MnSyA}{U}{MnSymbolA}{m}{n}
\DeclareFontShape{U}{MnSymbolA}{m}{n}{
<-6>  MnSymbolA5
<6-7>  MnSymbolA6
<7-8>  MnSymbolA7
<8-9>  MnSymbolA8
<9-10> MnSymbolA9
<10-12> MnSymbolA10
<12->   MnSymbolA12}{}
\DeclareFontShape{U}{MnSymbolA}{b}{n}{
<-6>  MnSymbolA-Bold5
<6-7>  MnSymbolA-Bold6
<7-8>  MnSymbolA-Bold7
<8-9>  MnSymbolA-Bold8
<9-10> MnSymbolA-Bold9
<10-12> MnSymbolA-Bold10
<12->   MnSymbolA-Bold12}{}
\DeclareMathSymbol{\leftpitchfork}{\mathrel}{MnSyA}{"8A}%
\newcommand\imCMsym[4][\mathord]{%
  \DeclareFontFamily{U} {#2}{}
  \DeclareFontShape{U}{#2}{m}{n}{
    <-6> #25
    <6-7> #26
    <7-8> #27
    <8-9> #28
    <9-10> #29
    <10-12> #210
    <12-> #212}{}
  \DeclareSymbolFont{CM#2} {U} {#2}{m}{n}
  \DeclareMathSymbol{#4}{#1}{CM#2}{#3}
}
\renewcommand{\jmath}{\CMjmath}
\renewcommand{\coprod}{\CMcoprod}
\newcommand{\catname}[1]{\mathbf{#1}}
\newcommand{\V}{\catname V}
\newcommand{\C}{\catname C}
\newcommand{\D}{\catname D}
\newcommand{\E}{\catname E}
\newcommand{\Set}{\catname{Set}}
\newcommand{\PSet}{\catname{Set}_\star}
\newcommand{\Poset}{\catname{Poset}}
\newcommand{\id}{\mathrm{id}}
\newcommand{\Id}{\mathrm{Id}}
\newcommand{\compose}{\circ}
\newcommand{\fcomp}{\cdot}
\newcommand{\natto}{\Rightarrow}
\newcommand{\op}{\mathrm{op}}
\newcommand{\iso}{\cong}
\newcommand{\0}{\mathbf 0}
\newcommand{\1}{\mathbf 1}
\newcommand{\tensor}{\otimes}
\newcommand{\lolly}{\multimap}
\newcommand{\lollyR}{\multimap^\mathrm{R}}
\newcommand{\lact}{\triangleright}
\newcommand{\ract}{\triangleleft}
\newcommand{\lactC}{\lact_\C}
\newcommand{\ractC}{\ract_\C}
\newcommand{\lactD}{\lact_\D}
\newcommand{\ractD}{\ract_\D}
\newcommand{\act}{\lact}
\newcommand{\actC}{\lactC}
\newcommand{\actD}{\lactD}
\newcommand{\power}{\leftpitchfork}
\newcommand{\powerC}{\power_\C}
\newcommand{\powerD}{\power_\D}
\renewcommand{\hom}{\rightarrowtriangle}
\newcommand{\jj}{j}
\newcommand{\jY}{\hat{\jmath}}
\newcommand{\iR}{i}
\newcommand{\pR}{p}
\newcommand{\copower}{\bullet}
\newcommand{\coproj}{\mathrm{in}}
\newcommand{\inl}{\mathrm{inl}}
\newcommand{\inr}{\mathrm{inr}}
\newcommand{\tuple}[1]{\langle#1\rangle}
\newcommand{\cotuple}[1]{[#1]}
\newcommand{\ctx}[1]{^{(#1)}}
\newcommand{\under}[1]{\underline{#1}}
\newcommand{\str}{\mathrm{str}}
\newcommand{\lstr}{\mathrm{str}}
\newcommand{\rstr}{\mathrm{str}^{\mathrm{R}}}
\newcommand{\pointfun}[1]{\llparenthesis #1 \rrparenthesis}
\newcommand{\List}{\mathrm{List}}
\newcommand{\ListM}{\mathsf{List}}
\newcommand{\append}{\mathbin{+\mkern-5mu+}}
\newcommand{\Wr}[1]{\mathrm{Wr}_{#1}}
\newcommand{\WrM}[1]{\mathsf{Wr}_{#1}}
\newcommand{\monoid}[1]{\mathsf{#1}}
\newcommand{\M}{\monoid M}
\newcommand{\Act}[1]{\catname{Act}\,#1}
\newcommand{\iact}{*}
\newcommand{\monad}[1]{\mathsf{#1}}
\renewcommand{\S}{\monad S}
\newcommand{\T}{\monad T}
\newcommand{\extend}[1]{#1^\dagger}
\newcommand{\pextend}[1]{#1^\dagger}
\newcommand{\Kl}[1]{\catname{Kl}\,#1}
\newcommand{\FAlg}[1]{\catname{alg}\,#1}
\newcommand{\TAlg}[1]{\catname{Alg}\,#1}
\newcommand{\fmap}{\mathrm{fmap}}
\newcommand{\bind}{\mathrm{bind}}
\newcommand{\sem}[1]{\llbracket #1 \rrbracket}
\begin{document}


\maketitle

\begin{abstract}
  Strong monads are important for several applications, in particular,
  in the denotational semantics of effectful languages, where strength
  is needed to sequence computations that have free variables.
  Strength is non-trivial: it can be difficult to determine whether a
  monad has any strength at all, and monads can be strong in multiple
  ways.
  We therefore review some of the most important known facts about strength 
and prove some new ones.
  In particular, we present a number of equivalent characterizations of strong functor
  and strong monad, and give some conditions that guarantee existence or uniqueness of strengths.
  We look at strength from three different perspectives: actions of a
  monoidal category $\V$, enrichment over $\V$, and powering over $\V$.
  We are primarily motivated by semantics of effects, but the results are
  also useful in other contexts.
\end{abstract}

\section{Introduction}
Following Moggi~\cite{moggi1989computational}, effectful computations are often modelled using
strong monads.
Strength also appears in other applications; for example, strength
is crucial for the notion of commutative monad~\cite{kock1970monads} used in the
construction of tensor products on categories of
algebras~\cite{linton1969coequalizers,kock1971bilinearity}, and
in measure theory~\cite{kock2012commutative}; strong functors are also important in the
study of abstract syntax~\cite{fiore1999abstract}.
It can be difficult in these contexts to determine whether a given functor or
monad admits a strength,
and various facts have been proved about strength to help with this.
Some appear in published work (often as a small lemma in a paper not
primarily about strength)~\cite{mulry2013notions,sato2018giry,levy2019strong}, while others are folklore.
These have some overlap, and levels of
generality vary.

We collect together a number of important results about strength.
There are two groups of results in particular that we focus on.
One is the equivalence of various definitions of strong functor and
strong monad.
These are useful in particular for reasoning about strong functors and
monads, and are also useful for constructing strengths for ordinary
functors and monads.
The other is results concerning existence and uniqueness of strengths
for functors and monads.
Several of these results are known, but a good number are, to the best
of our knowledge, new.

The difference between monads and strong monads is best seen by looking
at the \emph{Kleisli extension} operator.
If $T$ is the underlying endofunctor of a monad, then 
every morphism $f : X \to TY$ induces a morphism $\extend f : TX \to TY$, as
on the left below.
In the Cartesian case, if $T$ forms a \emph{strong} monad, then
the Kleisli extension has the more general form on the right.
\[
  \frac
    {f : X \to TY}
    {\extend f : TX \to TY}
  \qquad
  \frac
    {f : \Gamma \times X \to TY}
    {\extend f : \Gamma \times TX \to TY}
\]
Our main interest is the semantics of effects (though the results we give
here can be applied more widely).
Strength in this case enables interpretation of terms with free
variables.
Consider the following typing rule:
\[
  \frac
    {\Gamma \vdash t : A \qquad \Gamma, x : A \vdash t' : B}
    {\Gamma \vdash \mathsf{let}~x = t~\mathsf{in}~t' : B}
\]
In a monadic model of a call-by-value language, the terms $t$ and $t'$ would be interpreted
as morphisms $\sem{t} : \sem{\Gamma} \to T\sem{A}$ and
$\sem{t'} : \sem{\Gamma} \times \sem{A} \to T\sem{B}$, where $TX$ is the
object of (possibly effectful) computations that return values in $X$.
Using the strong Kleisli extension of $\sem{t'}$ we can interpret the
$\mathsf{let}$ as
\[
  \sem{\mathsf{let}~x = t~\mathsf{in}~t'} :
  \sem{\Gamma}
  \xrightarrow{\tuple{\id_{\sem{\Gamma}}, \sem{t}}}
  \sem{\Gamma} \times T\sem{A}
  \xrightarrow{\extend{\sem{t'}}}
  T\sem{B}
\]
The Kleisli extension of an ordinary monad suffices when $\Gamma$ is
empty (because then $\sem{\Gamma} = \1$), but
we need the strong version in general.

Instead of assuming products, we work in the more general setting of
an \emph{action} of a monoidal category on another category.
Strengths with respect to an action appear for example
in~\cite{cockett1992strong,fiore2008second,mellies2012parametric,kammar2017monad,szlachanyi2017tensor}.
Working with actions instead of a Cartesian, symmetric monoidal or
general monoidal structure does not add much complexity, but is useful
for some of the results we give.  We also approach strength from two
other perspectives.  The \emph{enriched} perspective is well-known for
categories enriched over themselves and goes back to
Kock~\cite{kock1972strong}; by generalizing to actions, we remove the
self-enrichment restriction.  The third perspective, which we call
\emph{powering}, is less well-known, but was also first considered by
Kock~\cite{Koc:clocgc}.
The same three-perspective approach can be found in the nLab article on
strong monads~\cite{ncatlab:strm}, but for the most part still only for
the self-enriched case.

We discuss actions, strong functors, and strong monads in
\cref{sec:actions,sec:strong-functors,sec:strong-monads}, looking
especially at uniqueness and existence of strengths for
functors.
Our novel contributions are sufficient criteria for unique existence
(based on our notion of \emph{functional completeness}), and for non-unique
existence (based on our notion of \emph{weak functional completeness}).
We also provide a number of examples. We consider enrichment
in \cref{sec:enrichment} and powering in \cref{sec:powering}. 
In \cref{sec:biactions}, we
discuss biactions, bistrong functors and commutative monads.


\section{Monoidal categories and actions}
\label{sec:actions}

We begin by recalling the notions of \emph{monoidal category} and
\emph{action}, and give our primary examples of these.

\begin{definition}
  A \emph{monoidal category} $(\V, I, \tensor)$ consists of a category
  $\V$, an object $I \in \V$ called the \emph{unit}, and a functor
  $\tensor : \V \times \V \to \V$ called the \emph{tensor},
  equipped with three natural isomorphisms
  \[
    \lambda_\Gamma : I \tensor \Gamma \to \Gamma
    \qquad
    \rho_\Gamma : \Gamma \to \Gamma \tensor I
    \qquad
    \alpha_{\Gamma_1,\Gamma_2,\Gamma_3}
      : (\Gamma_1 \tensor \Gamma_2) \tensor \Gamma_3
        \to \Gamma_1 \tensor (\Gamma_2 \tensor \Gamma_3)
  \]
      satisfying the following coherence conditions:
      \[
        \begin{tikzcd}[column sep=normal,ampersand replacement=\&]
          \Gamma \tensor \Delta \arrow[r, equals] \arrow[d, "\rho_\Gamma \tensor \Delta"'] \&
          \Gamma \tensor \Delta \\
          (\Gamma \tensor I) \tensor \Delta \arrow[r, "\alpha_{\Gamma, I, \Delta}"'] \&
          \Gamma \tensor (I \tensor \Delta) \arrow[u, "\Gamma \tensor \lambda_\Delta"']
        \end{tikzcd}
        \hspace{-1.4em}
        \begin{tikzcd}[column sep=-4.2em, row sep=1.6em,ampersand replacement=\&]
          \&\&
          (\Gamma_1 \tensor \Gamma_2) \tensor (\Gamma_3 \tensor \Gamma_4)
          \arrow[rrd, "\alpha_{\Gamma_1, \Gamma_2, \Gamma_3\tensor \Gamma_4}"]
          \\
          \mathrlap{\hspace{1.6em}((\Gamma_1 \tensor \Gamma_2) \tensor \Gamma_3) \tensor \Gamma_4}
          \phantom{((\Gamma_1 \tensor \Gamma_2) \tensor \Gamma_3) \tensor \Gamma_4}
          \arrow[rru, "\alpha_{\Gamma_1 \tensor \Gamma_2, \Gamma_3, \Gamma_4}"]
          \arrow[rd, "\alpha_{\Gamma_1, \Gamma_2, \Gamma_3} \tensor \Gamma_4"'] \&\&\&\&
          \phantom{\Gamma_1 \tensor (\Gamma_2 \tensor (\Gamma_3 \tensor \Gamma_4))}
          \mathllap{\Gamma_1 \tensor (\Gamma_2 \tensor (\Gamma_3 \tensor \Gamma_4))\hspace{1.6em}} \\
          \&
          (\Gamma_1 \tensor (\Gamma_2 \tensor \Gamma_3)) \tensor \Gamma_4
          \arrow[rr, "\alpha_{\Gamma_1, \Gamma_2\tensor \Gamma_3, \Gamma_4}"'{yshift=-1ex}]
          \&\&
          \Gamma_1 \tensor ((\Gamma_2 \tensor \Gamma_3) \tensor \Gamma_4)
          \arrow[ru, "\Gamma_1 \tensor \alpha_{\Gamma_2, \Gamma_3, \Gamma_4}"']
        \end{tikzcd}
      \]
\end{definition}

\begin{example}
Every category $\V$ with finite products forms a
\emph{Cartesian monoidal category} $(\V, \1, \times)$, in which the unit
is the terminal object $\1$, and the tensor of $X$ and $Y$ is the binary
product $X \times Y$.
\end{example}

\begin{example}
  Let $\PSet$ be the category of pointed sets and point-preserving
  functions.
  Objects of $\PSet$ are sets $X$ with a distinguished element
  $\star \in X$; morphisms are functions $f : X \to Y$ such that
  $f \star = \star$.
  We consider two monoidal structures on $\PSet$.
  The first is the Cartesian monoidal structure, which is inherited from
  $\Set$ (the product $X \times Y$ is the product of sets, with
  distinguished element $(\star, \star)$).
  The second is the \emph{smash product}
  $X \tensor Y = \{(x, y) \in X \times Y \mid x = \star \Leftrightarrow y = \star\}$,
  which has the two-element pointed set $\{\star, 1\}$ as the unit.
  On morphisms, $\tensor$ is given by
  $(f \tensor g) (x, y) = (\star, \star)$ if $f x = \star$ or
  $g y = \star$, and by $(f \tensor g) (x, y) = (fx, gy)$ otherwise.
\end{example}

\begin{example}
  Let $\M = (M, 1, \cdot)$ be a (set-theoretic) monoid.
  The category $\Act \M$ of right $\M$-actions has as objects sets $X$
  equipped with a function $(*) : X \times M \to X$ such that
  $x * 1 = x$ and $x * (m \cdot m') = (x * m) * m'$ for all $x \in X$
  and $m, m' \in M$.
  Morphisms $f : X \to Y$ in $\Act \M$ are functions that preserve the
  action, i.e.\ $f(x * m) = (f x) * m$ for all $x \in X$ and $m \in M$.
  The category $\Act \M$ is Cartesian monoidal; the terminal object $\1$ is the
  one-element set equipped with the unique $*$, and the product
  $X \times Y$ is the product of sets with
  $(x, y) * m = (x * m, y * m)$.

  When $\M$ is natural numbers with addition, $\Act \M$ is isomorphic to
  the category of sets $X$ equipped with an endofunction $e : X \to X$;
  morphisms $f : X \to Y$ are functions such that
  $f \compose e = e \compose f$.
  The action on an object $X$ is $x * n = e^n\,x$. This is isomorphic to the category $[\mathbb{N}, \Set]$ where $\mathbb{N}$ is the one-object category with natural numbers as morphisms and addition as composition. 
\end{example}

\begin{definition}
  A \emph{(left) action}\footnote{A category $\C$ with a left action of a monoidal category $\V$ is also called a (left) $\V$-\emph{actegory}.} of a monoidal category
  $(\V, I, \tensor)$ on a category $\C$ is a functor
  $\act : \V \times \C \to \C$ equipped with two natural isomorphisms
  \[
    \lambda_X : I \act X \to X
    \qquad
    \alpha_{\Gamma',\Gamma,X}
      : (\Gamma' \tensor \Gamma) \act X
        \to \Gamma' \act (\Gamma \act X)
  \]
      satisfying the following coherence conditions:
      \[
\begin{tikzcd}[ampersand replacement=\&]
	{(I \tensor \Gamma) \lact X} \& {I \lact (\Gamma \lact X)} \\
	{\Gamma \lact X} \& {\Gamma \lact X} \\
	{(\Gamma \tensor I) \lact X} \& {\Gamma \lact (I \lact X)}
	\arrow["{\lambda_{\Gamma \lact X}}", from=1-2, to=2-2]
	\arrow["{\Gamma \lact \lambda_X}"', from=3-2, to=2-2]
	\arrow["{\alpha_{\Gamma, I, X}}"', from=3-1, to=3-2]
	\arrow["{\rho_\Gamma \lact X}"', from=2-1, to=3-1]
	\arrow["{\lambda_\Gamma \lact X}"', from=1-1, to=2-1]
	\arrow["{\alpha_{I, \Gamma, X}}", from=1-1, to=1-2]
	\arrow[Rightarrow, no head, from=2-1, to=2-2]
\end{tikzcd}
        \hspace{-1.0em}
        \begin{tikzcd}[column sep=-3.2em, row sep=1.6em, ampersand replacement=\&]
          \&\&
          (\Gamma_1 \tensor \Gamma_2) \act (\Gamma_3 \act X)
          \arrow[rrd, "\alpha_{\Gamma_1, \Gamma_2, \Gamma_3\act X}"]
          \\
          \mathrlap{\hspace{1.6em}((\Gamma_1 \tensor \Gamma_2) \tensor \Gamma_3) \act X}
          \phantom{((\Gamma_1 \tensor \Gamma_2) \tensor \Gamma_3) \act X}
          \arrow[rru, "\alpha_{\Gamma_1 \tensor \Gamma_2, \Gamma_3, X}"]
          \arrow[rd, "\alpha_{\Gamma_1, \Gamma_2, \Gamma_3} \act X"'] \&\&\&\&
          \phantom{\Gamma_1 \act (\Gamma_2 \act (\Gamma_3 \act X))}
          \mathllap{\Gamma_1 \act (\Gamma_2 \act (\Gamma_3 \act X))\hspace{1.6em}} \\
          \&
          (\Gamma_1 \tensor (\Gamma_2 \tensor \Gamma_3)) \act X
          \arrow[rr, "\alpha_{\Gamma_1, \Gamma_2\tensor \Gamma_3, X}"'{yshift=-0.4ex}]
          \&\&
          \Gamma_1 \act ((\Gamma_2 \tensor \Gamma_3) \act X)
          \arrow[ru, "\Gamma_1 \act \alpha_{\Gamma_2, \Gamma_3, X}"']
        \end{tikzcd}
      \]
\end{definition}

A left action of $\V$ on $\C$ is the same as a monoidal functor
from $\V$ to $[\C,\C]$, where we equip $[\C,\C]$ with the composition monoidal structure. 


\begin{example}
  The tensor of any monoidal category $\V$ (in particular,
  the examples above) forms an action of $\V$ on
  itself, with $X \act Y = X \tensor Y$.
\end{example}

\begin{example}\label{example:copowers}
  Consider $\V = \Set$ with the Cartesian monoidal
  structure.
  A category $\C$ \emph{has copowers} over $\Set$ when for all sets $\Gamma$ and
  objects $X \in \C$, the coproduct
  $\Gamma \copower X = \coprod_{\gamma \in \Gamma} X \in \V$ exists.
  The object $\Gamma \copower X$ is the \emph{copower} of $\Gamma$ and
  $X$; its universal property is that morphisms
  $f : \Gamma \copower X \to Y$ are in natural bijection with tuples
  $(f_\gamma : X \to Y)_{\gamma \in \Gamma}$ of morphisms, by taking
  $f_\gamma = f \compose \coproj_{\gamma}$.
  If $\C$ has copowers over $\Set$, then they form an action
  $\Gamma \act X = \Gamma \copower X$ of $(\Set, \1, \times)$ on $\C$.
  For $\C = \Set$, the copower $X \copower Y$ is just the Cartesian
  product $X \times Y$.
\end{example}

In the relationship between strength and enrichment explained below in
\cref{sec:enrichment}, the action $\act$ forms the \emph{copowers} (or
\emph{tensors}) of the enriched category in a more general sense of
`copower'. Any locally small category $\C$ is uniquely
$\Set$-enriched. Its copowers, if they exist, are given by small
coproducts as described above.

\section{Strong functors}
\label{sec:strong-functors}

Throughout this section, we suppose a monoidal category
$\V$, whose objects $\Gamma$ we view as \emph{contexts}
(because of their role in the introduction as interpretations of typing
contexts).
We then consider strong functors $F : \C \to \D$, where $\C$ and $\D$
are categories equipped with actions $\actC : \V \times \C \to \C$ and
$\actD : \V \times \D \to \D$.
We have no need to assume that $\V$ is a \emph{symmetric} monoidal category (but
this is the case for all of our examples).
A $\C$-morphism $\Gamma \actC X \to Y$ can be thought of as a morphism
from $X$ to $Y$ \emph{in context $\Gamma$}, and similarly for $\D$.

There are several equivalent definitions of strong functor.
The following is not the standard one, but matches closely the intuition that the context
$\Gamma$ should be preserved, enabling the interpretation of terms with
free variables.
\begin{definition}
  A \emph{(left) strong functor} $F : \C \to \D$ consists of an object
  $FX \in \D$ for each object $X \in \C$ and a $\D$-morphism
  $F\ctx{\Gamma} f : \Gamma \actD FX \to FY$ for each $\C$-morphism
  $f : \Gamma \actC X \to Y$, such that $F\ctx{\Gamma}$ is natural
  in $\Gamma \in \V$, 
and 
\[
\begin{array}{@{}c@{\quad}l}
  F\ctx{I}\lambda_X = \lambda_{FX} 
   &\textrm{for $X \in \C$}
\\
    F\ctx{\Gamma'\tensor\Gamma}
      (g \compose (\Gamma' \actC f) \compose \alpha_{\Gamma',\Gamma,X})
    = F\ctx{\Gamma'} g \compose (\Gamma' \actD F\ctx{\Gamma}f)
      \compose \alpha_{\Gamma',\Gamma, FX}
  &\textrm{for $f : \Gamma \actC X \to Y$, $g : \Gamma' \actC Y \to Z$}
\end{array}
\]
  If $F, G : \C \to \D$ are strong functors, then a
  \emph{strong natural transformation} $\tau : F \natto G$ consists
  of a morphism $\tau_X : FX \to GX$ for each $X \in \C$, such that
  $
    \tau_Y \compose F\ctx{\Gamma} f
    = G\ctx{\Gamma} f \compose (\Gamma \actD \tau_X)
  $
  for each $f : \Gamma \actC X \to Y$.
\end{definition}

Every strong functor $F$ has an underlying ordinary functor
$\under F : \C \to \D$, given on objects by $\under F X = FX$
and on morphisms $f : X \to Y$ by
$
  \under F f = F\ctx{I} (f \compose \lambda_X) \compose \lambda^{-1}_{FX}
    : FX \to FY
$.
Every strong natural transformation $\alpha : F \natto G$
is a natural transformation $\alpha : \under F \natto \under G$.
There is an identity strong functor $\Id$ and each pair of strong
functors $F : \C \to \D$ and $G : \D \to \E$ has a composition
$G \fcomp F : \C \to \E$. These are given on objects
$X$ and morphisms $f : \Gamma \actC X \to Y$ by
\[
  \Id X = X
  \quad
  \Id\ctx{\Gamma} f = f
  \qquad\qquad
  (G \fcomp F) X = G (F X)
  \quad
  (G \fcomp F)\ctx{\Gamma} f = G\ctx{\Gamma} (F\ctx{\Gamma} f)
\]

\begin{example}\label{example:list-functor}
  Let $\actC$ and $\actD$ both be the action of $\Set$ on itself
  given by the Cartesian monoidal structure.
  The strong functor $\List : \Set \to \Set$ maps each set $X$ to the
  set of lists over $X$; on functions $f : \Gamma \times X \to Y$ it is
  given by
$
    \List\ctx{\Gamma} f\,(\gamma, [x_1, \dots, x_n]) =
      [f (\gamma, x_1), \dots, f (\gamma, x_n)]
$.
  The ordinary functor $\under{\List} : \Set \to \Set$ is then just the
  usual list functor, given on functions by
  $\under{\List}\,f\,[x_1, \dots, x_n] = [f x_1, \dots, f x_n]$.
\end{example}

An alternative definition is that a strong functor is an ordinary functor
equipped with a \emph{strength}.
This is the more common definition, and we make some use of it below.
\begin{definition}
  A \emph{(left) strength} for an ordinary functor $F : \C \to \D$
  is a family of $\D$-morphisms
  $\str_{\Gamma, X} : \Gamma \actD FX \to F(\Gamma \actC X)$, natural in
  $\Gamma \in \V$ and $X \in \C$ and such that
  \[
\begin{tikzcd}
	{I \actD FX} \\
	{F(I \actC X)} & FX
	\arrow["{\str_{I, X}}"', from=1-1, to=2-1]
	\arrow["{F\lambda_X}"', from=2-1, to=2-2]
	\arrow["{\lambda_{FX}}", from=1-1, to=2-2]
\end{tikzcd}
    ~~
\begin{tikzcd}
	{(\Gamma' \tensor \Gamma) \actD FX} & {\Gamma' \actD (\Gamma \actD FX)} & {\Gamma' \actD F(\Gamma \actC X)} \\
	{F((\Gamma' \tensor \Gamma) \actC X)} && {F(\Gamma' \actC (\Gamma \actC X))}
	\arrow["\alpha_{\Gamma',\Gamma,FX}", from=1-1, to=1-2]
	\arrow["{\Gamma' \actD \str_{\Gamma, X}}", from=1-2, to=1-3]
	\arrow["{\str_{\Gamma', \Gamma \actC X}}", from=1-3, to=2-3]
	\arrow["{\str_{\Gamma' \tensor \Gamma, X}}"', from=1-1, to=2-1]
	\arrow["F\alpha_{\Gamma',\Gamma,X}"', from=2-1, to=2-3]
\end{tikzcd}
  \]
\end{definition}

We show the equivalence between these two definitions of strong
functor and the corresponding fact for strong natural transformations.

\begin{proposition}\label{prop:strong-functors-strength}
  If $F : \C \to \D$ is an ordinary functor, then there is a bijection
  between (1) strong functors $\hat F$ such that $\under{\hat F} = F$,
  and (2) strengths $\str$ for $F$.
  If $F, G : \C \to \D$ are functors equipped with the equivalent
  data of this bijection, then a
  natural transformation $\tau : F \natto G$ is a strong natural
  transformation $\tau : \hat F \natto \hat G$ exactly when the
  following commutes:
  \[
    \begin{tikzcd}[column sep=large, row sep=small]
      \Gamma \actD FX
      \arrow[r, "\Gamma \actD \tau_X"]
      \arrow[d, "\str_{\Gamma, X}"'] &
      \Gamma \actD GX
      \arrow[d, "\str_{\Gamma, X}"] \\
      F(\Gamma \actC X)
      \arrow[r, "\tau_{\Gamma \actC X}"'] &
      G(\Gamma \actC X)
    \end{tikzcd}
  \]
\end{proposition}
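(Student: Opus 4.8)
The plan is to make the bijection explicit and then reduce essentially every verification to substituting identity morphisms into the strong-functor laws. In one direction, given a strength $\str$ for $F$, I would turn it into a strong functor by setting $\hat F\ctx{\Gamma} f := Ff \compose \str_{\Gamma, X}$ for each $f : \Gamma \actC X \to Y$. In the other direction, given a strong functor $\hat F$ with $\under{\hat F} = F$, I would recover a strength by evaluating at identities, $\str_{\Gamma, X} := \hat F\ctx{\Gamma}\,\id_{\Gamma \actC X}$. The task is then to check that each construction lands in the intended structure and that the two are mutually inverse.

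The crux—and the step I expect to be the main obstacle—is to establish a general functoriality law for an arbitrary strong functor: for $f : \Gamma \actC X \to Y$ and an ordinary morphism $h : Y \to Z$,
\[ \hat F\ctx{\Gamma}(h \compose f) = \under{\hat F} h \compose \hat F\ctx{\Gamma} f. \]
I would obtain this from the multiplication law specialised to $\Gamma' = I$, taking the outer morphism to be $h \compose \lambda_Y : I \actC Y \to Z$. On the left, naturality of $\lambda$ and the triangle coherence $\lambda_{\Gamma \actC X} \compose \alpha_{I, \Gamma, X} = \lambda_\Gamma \actC X$ rewrite the argument as $(h \compose f) \compose (\lambda_\Gamma \actC X)$, after which naturality of $\hat F\ctx{-}$ in $\Gamma$ factors out $\lambda_\Gamma \actD FX$. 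On the right, the definition of $\under{\hat F}$ identifies $\hat F\ctx{I}(h \compose \lambda_Y)$ with $\under{\hat F} h \compose \lambda_{FY}$, and naturality of $\lambda$ together with the same coherence in $\D$ again produces the factor $\lambda_\Gamma \actD FX$. Cancelling this isomorphism yields the law.

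With this law in hand the remaining checks become routine substitutions. For the round trips: starting from a strength, the composite returns $\hat F\ctx{\Gamma}\id = F\id \compose \str_{\Gamma, X} = \str_{\Gamma, X}$, while starting from a strong functor it returns $Ff \compose \hat F\ctx{\Gamma}\id = \hat F\ctx{\Gamma}(f \compose \id) = \hat F\ctx{\Gamma} f$ by the functoriality law. That $\hat F\ctx{\Gamma} f = Ff \compose \str_{\Gamma, X}$ is a genuine strong functor I would confirm by checking naturality in $\Gamma$, the unit law, and the multiplication law directly, each collapsing onto a naturality square or coherence for $\str$; its underlying functor is $F$ because the unit coherence $F\lambda_X \compose \str_{I,X} = \lambda_{FX}$ reduces $F\lambda_X \compose \str_{I,X} \compose \lambda^{-1}_{FX}$ to $\id_{FX}$. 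Conversely, that $\str_{\Gamma, X} = \hat F\ctx{\Gamma}\id$ is a strength follows by feeding identities into the laws: its unit and multiplication coherences are the unit law and the multiplication law at $f = \id$, $g = \id$ (using the functoriality law to rewrite $\hat F\ctx{\Gamma}\alpha_{\Gamma',\Gamma,X}$); naturality in $\Gamma$ combines naturality of $\hat F\ctx{-}$ with the functoriality law; and naturality in $X$ is the multiplication law with the inner context set to $I$, using the triangle coherence $(\Gamma \actC \lambda_{X'}) \compose \alpha_{\Gamma,I,X'} = \rho_\Gamma^{-1} \actC X'$ and cancellation of $\rho_\Gamma^{-1} \actD FX'$.

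For the statement about natural transformations I would argue both implications by the same device. If $\tau : \hat F \natto \hat G$ is strong, instantiating $\tau_Y \compose \hat F\ctx{\Gamma} f = \hat G\ctx{\Gamma} f \compose (\Gamma \actD \tau_X)$ at $f = \id_{\Gamma \actC X}$ gives exactly the displayed square. Conversely, assuming the square, I would expand $\hat F\ctx{\Gamma} f = Ff \compose \str_{\Gamma, X}$ and $\hat G\ctx{\Gamma} f = Gf \compose \str_{\Gamma, X}$, use ordinary naturality of $\tau$ to slide $\tau$ past $Ff$, and then apply the square, recovering the strong naturality condition for every $f$. The only care needed here is bookkeeping between the two presentations of $\hat F$ and $\hat G$, which the bijection just established lets me pass between freely.
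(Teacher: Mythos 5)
Your proposal is correct and takes essentially the same route as the paper: your mutually inverse assignments $\str_{\Gamma,X} = \hat F\ctx{\Gamma}\,\id_{\Gamma\actC X}$ and $\hat F\ctx{\Gamma} f = Ff \compose \str_{\Gamma,X}$ are exactly the Yoneda correspondence the paper invokes, and your ``functoriality law'' $\hat F\ctx{\Gamma}(h \compose f) = \under{\hat F}h \compose \hat F\ctx{\Gamma} f$ (derived from the multiplication law at $\Gamma' = I$ with $g = h \compose \lambda_Y$) is precisely the naturality of $\hat F\ctx{\Gamma}$ in $Y$ that the paper asserts without proof. You merely unpack by hand what the paper compresses into the Yoneda lemma, including the same one-line arguments for the natural-transformation condition.
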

\begin{proof}
  By the Yoneda lemma, families of functions
$    \hat F\ctx{\Gamma}
      : \C(\Gamma \actC X, Y) \to \D(\Gamma \actD FX, FY)
$
  natural in $Y \in \C$ are in bijection with morphisms
  $\str_{\Gamma, X} : \Gamma \actD FX \to F(\Gamma \actC X)$.
  If $\hat F$ is a strong functor with $\under{\hat F} = F$,
  then $\hat F\ctx{\Gamma}$ is natural in $Y$.
  Moreover, such a natural family 
  forms a strong functor $\hat F$ with $\under{\hat F} = F$ exactly when
  $\str_{\Gamma, X} = \hat F\ctx{\Gamma}\,\id_{\Gamma\actC X}$
  is a strength for $F$.
  The fact about natural transformations
  follows immediately from the fact that given a strength $\str$, the
  corresponding strong functor is given by
  $\hat F\ctx{\Gamma} f = F f \compose \str_{\Gamma, X}$.
\end{proof}

\subsection{Uniqueness and existence of strengths}\label{sec:uniqueness}
It is well-known that every endofunctor $F$ on $\Set$ has a unique
strength with respect to the Cartesian monoidal structure.
Various other results about uniqueness of strengths have been proved
(e.g.\ \cite{moggi1989computational,sato2018giry,levy2019strong}).
Uniqueness results are useful for determining whether a given functor
admits a strength at all when for some reason there is only one candidate to check.
Conversely, existence results make it easier to construct strengths for
functors.
We supply uniqueness and existence results for strengths in this
section.

We first define a notion of \emph{functional completeness} for an
action, which guarantees both uniqueness and existence of strengths.%
\footnote{This notion is similar in spirit to functional completeness in
categorical logic~\cite{Lam:func,Pav:catlna}, but not the same. 
In categorical logic, $\V$ would be functionally complete if, 
for any $\V[\Gamma]$-morphism $z : X \to Y$, there were a unique $\V$-morphism $f : \Gamma \otimes X \to Y$ such that $J f \compose (\gamma \otimes X) \compose \lambda^{-1}_X = z$.
Here by $\V[\Gamma]$ we mean the monoidal category obtained by freely extending $\V$ with a morphism $\gamma : I \to \Gamma$---an ``indeterminate'' point of $\Gamma$---and by $J$ we mean the inclusion of $\V$ in $\V[\Gamma]$.
(For a Cartesian monoidal $\V$, one would extend to a Cartesian monoidal category $\V[\Gamma]$ instead of just a monoidal category.)
We are looking for a more distinctive name for our notion.}
We call the elements $\gamma \in \V(I, \Gamma)$ the \emph{points} of
$\Gamma$.
For each morphism $f : \Gamma \actD X \to Y$ in $\D$, we have a function
$\pointfun f : \V(I, \Gamma) \to \D(X, Y)$, \emph{applying} $f$ to
points, by defining
$\pointfun f \gamma = f \compose (\gamma\actD X)\compose \lambda^{-1}_X$.
\begin{definition}
  We say that $\actD$ is \emph{functionally complete} if, for every
  function $\zeta : \V(I, \Gamma) \to \D(X, Y)$, there is a unique
  $\D$-morphism $f : \Gamma \actD X \to Y$ such that $\pointfun {f} = \zeta$.
\end{definition}

Writing $\Phi_\Gamma \zeta$ for the unique $f$ from the definition, we
get a family of functions $\Phi$ that are the inverses of the functions
$\pointfun{-}$. This family $\Phi$ is natural in $\Gamma$, $X$ and
$Y$. Moreover, it satisfies
\[
\begin{array}{@{}c@{\qquad}l}
    \Phi_{I} \zeta 
    = \zeta \id_I \compose \lambda_X
    & \textrm{for~} \zeta : \V(I, I) \to \D(X, Y)
    \\
    \Phi_{\Gamma' \tensor \Gamma} \zeta
    =
    \Phi_{\Gamma'} (\lambda \gamma'.\,
      \Phi_{\Gamma} (\lambda \gamma.\
        \zeta ((\gamma' \tensor \gamma) \compose \rho_I)))
          \compose \alpha_{\Gamma', \Gamma, X}
    & \textrm{for~} \zeta : \V(I, \Gamma' \tensor \Gamma) \to \D(X, Y)
\end{array}
\]
The key consequence is the following (which is a corollary of
\cref{prop:strengths-unique,prop:strengths-exist} below):
\begin{proposition}
  If $\actD$ is functionally complete, then every functor
  $F : \C \to \D$ has a unique strength, and every natural
  transformation $\tau : F \natto G$ of functors $\C \to \D$ is
  strong.
\end{proposition}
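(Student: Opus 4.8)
The plan is to reduce every claim to the defining bijection of functional completeness, i.e.\ to the fact that a $\D$-morphism $f : \Gamma \actD A \to B$ is determined by its point-application $\pointfun{f} : \V(I,\Gamma) \to \D(A,B)$, and that $\pointfun{-}$ is injective. The first step is to pin down what the point-application of \emph{any} strength must be. Given a strength $\str$ on $F$, naturality of $\str$ in $\Gamma \in \V$ instantiated at a point $\gamma : I \to \Gamma$ gives $\str_{\Gamma,X} \compose (\gamma \actD FX) = F(\gamma \actC X) \compose \str_{I,X}$, while the unit law forces $\str_{I,X} = (F\lambda_X)^{-1} \compose \lambda_{FX}$ (using that $F\lambda_X$ is invertible, since $\lambda_X$ is). Combining these and unwinding the definition of $\pointfun{-}$ yields
\[
  \pointfun{\str_{\Gamma,X}}\,\gamma = F((\gamma \actC X) \compose \lambda_X^{-1})
\]
for every point $\gamma$. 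I write $\zeta^{\Gamma,X}$ for the right-hand side, viewed as a function of $\gamma$.

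Uniqueness is then immediate: the displayed formula determines $\pointfun{\str_{\Gamma,X}}$ completely, so by functional completeness there is at most one candidate, namely $\str_{\Gamma,X} = \Phi_\Gamma \zeta^{\Gamma,X}$, for each $\Gamma$ and $X$. For existence I would take this forced value as a definition, $\str_{\Gamma,X} := \Phi_\Gamma \zeta^{\Gamma,X}$, so that $\pointfun{\str_{\Gamma,X}} = \zeta^{\Gamma,X}$ holds by construction, and then verify the strength axioms. Each such axiom is an equation between morphisms out of some $\Gamma \actD A$, so it can be checked by applying $\pointfun{-}$ and comparing the resulting functions of $\gamma$. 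Naturality in $\Gamma$ and in $X$ then follows from naturality of $\Phi$ together with the interchange law for $\actD$ and naturality of $\lambda$. The unit law drops out of the formula $\Phi_I \zeta = \zeta\,\id_I \compose \lambda_{FX}$, which gives $\str_{I,X} = (F\lambda_X)^{-1}\compose\lambda_{FX}$ and hence $F\lambda_X \compose \str_{I,X} = \lambda_{FX}$.

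The associativity law is the one genuine computation, and I expect it to be the main obstacle. Here I would feed $\zeta^{\Gamma'\tensor\Gamma,X}$ into the formula for $\Phi_{\Gamma'\tensor\Gamma}$ and simplify both sides of the pentagon-shaped square using the coherence ($\alpha$) laws of the actions $\actC$ and $\actD$. This is where the nested description $\Phi_{\Gamma'}(\lambda\gamma'.\,\Phi_\Gamma(\dots))$ must be matched against the composite $\str_{\Gamma',\Gamma\actC X}\compose(\Gamma'\actD\str_{\Gamma,X})\compose\alpha_{\Gamma',\Gamma,FX}$, and the bookkeeping of associators is the delicate part; the point-application method keeps it manageable by reducing it, once again, to an equality of functions of a single point.

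Finally, for strong naturality I would invoke \cref{prop:strong-functors-strength}, which reduces the claim that $\tau : F \natto G$ is strong (with $F, G$ carrying their unique strengths) to commutativity of the square relating the two strengths. Both composites in that square are morphisms out of $\Gamma \actD FX$, so by functional completeness it suffices to show their point-applications agree. Computing each via $\pointfun{\str_{\Gamma,X}} = \zeta^{\Gamma,X}$, the down-then-across path gives $\tau_{\Gamma\actC X}\compose F((\gamma\actC X)\compose\lambda_X^{-1})$, and the across-then-down path gives $G((\gamma\actC X)\compose\lambda_X^{-1})\compose\tau_X$ (after moving $\tau_X$ past $\gamma\actD(-)$ by interchange and past $\lambda^{-1}$ by naturality of $\lambda$). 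These coincide by naturality of $\tau$, so the square commutes and $\tau$ is strong.
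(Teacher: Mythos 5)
Your proposal is correct and follows essentially the same route as the paper, which obtains this proposition as a corollary of \cref{prop:strengths-unique,prop:strengths-exist}: uniqueness because naturality and the unit law force $\pointfun{\str_{\Gamma,X}}\gamma = F(\pointfun{\id_{\Gamma \actC X}}\gamma)$ (your $F((\gamma \actC X) \compose \lambda_X^{-1})$) and $\pointfun{{-}}$ is injective, and existence by setting $\str_{\Gamma,X} = \Phi_\Gamma(\lambda\gamma.\,F(\pointfun{\id_{\Gamma \actC X}}\gamma))$ for the canonical weak functional completeness structure $\Phi$ and verifying the axioms in the image of $\pointfun{{-}}$. Your handling of strong naturality via \cref{prop:strong-functors-strength} and a pointwise comparison of the two composites likewise matches the paper's argument.
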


\begin{example}
  The category $\Set$ with the Cartesian product is functionally
  complete; this is why endofunctors on $\Set$ have unique strengths.
  More generally, if $\D$ has copowers over $\Set$, then the action $\copower$ of
  $(\Set, \1, \times)$ on $\D$ is functionally complete.
  In this case, points $\gamma \in \Set(\1, \Gamma)$ are just elements
  of the set $\Gamma$.
  Functional completeness thus says equivalently that, for every
  function $\zeta : \Gamma \to \D(X, Y)$, there is a unique
  $f : \Gamma \copower X \to Y$ such that
  $f \compose \coproj_{\gamma} = \zeta \gamma$ for
  all $\gamma \in \Gamma$. This is exactly the universal property of
  $\Gamma \copower X$.
\end{example}

\begin{example}
  In contrast, the Cartesian product of pointed sets is not functionally
  complete as an action of $(\PSet, \1, \times)$ on itself: every $\Gamma$
  has only one point, so the morphisms $f : \Gamma \times X \to Y$
  fail to be unique.
  The Cartesian product of posets also fails to be functionally complete as an action of $(\Poset, \1, \times)$ on itself: the morphisms
  $f : \Gamma \times X \to Y$ in this case are necessarily given by
  $f(\gamma, x) = \zeta(\gamma)(x)$, so are unique if they exist, but this $f$ may fail to be
  monotone.
\end{example}

There are few examples of functionally complete actions.
We break down the notion of functional completeness into
\emph{well-pointedness}, which guarantees uniqueness of strength, and
the existence of a \emph{weak functional completeness structure}, which
guarantees existence of a canonical strength for each functor.
Both of these have more examples.

\begin{definition}
  The action $\actD$ is said to be \emph{well-pointed} (or \emph{have enough points}) when $\pointfun {{-}}$ is
  injective, i.e.\ when $\pointfun f = \pointfun g$ implies $f = g$ for
  all $f, g : \Gamma \actD X \to Y$.
\end{definition}
If the action is a monoidal category acting on itself,
well-pointedness in our sense is equivalent to Abramsky and Heunen's
\emph{(monoidal) well-pointedness}~\cite{abramsky2012h}.
According to their
definition, a monoidal category $\V$ is well-pointed if two morphisms
$f, g : X \otimes X' \to Y$ are equal whenever
$f \compose (\xi \otimes \xi') \compose \rho_I = g \compose (\xi
\otimes \xi') \compose \rho_I$
for all $\xi : I \to X$, $\xi' : I \to X'$.
For a Cartesian monoidal category acting on itself, our
notion of well-pointedness agrees with the usual notion defined with respect
to a terminal object (two morphisms $f, g : X \to Y$ are equal if
$f \compose \xi = g \compose \xi$ for all $\xi : \1 \to X$).
A similar simplification is possible when each ${-}\actD X$ has a
right adjoint (i.e.\ there is a corresponding
\emph{enrichment} in the sense of \cref{sec:enrichment}, for example
when $\actD$ is the action of a monoidal closed category on itself).
When these right adjoints exist, if two $\V$-morphisms
$f, g : \Gamma \to \Delta$ are equal whenever
$f \compose \gamma = g \compose \gamma$ for all $\gamma : I \to \Gamma$, then
$\actD$ is well-pointed.

Functional completeness is a strictly stronger property than
well-pointedness; the latter implies that strengths are unique if they
exist, but does not guarantee existence.

\begin{proposition}\label{prop:strengths-unique}
  Suppose that $\actD$ is well-pointed.
  A functor $F : \C \to \D$ has a strength exactly when, for every
  $\Gamma, X$, there is a morphism
  $\str_{\Gamma, X} : \Gamma \actD F X \to F(\Gamma \actC X)$ such that
  \begin{equation}
    \pointfun{\str_{\Gamma, X}} \gamma
      = F(\pointfun{\id_{\Gamma \actC X}} \gamma)
      \quad \textrm{for~} \gamma \in \V(I, \Gamma)
  \end{equation}
  When the (necessarily unique) morphisms $\str_{\Gamma, X}$ exist,
  $\str$ is the only strength for $F$.
  Moreover, if $G, H : \C \to \D$ are strong functors, then every
  natural transformation $\tau : G \natto H$ is strong.
\end{proposition}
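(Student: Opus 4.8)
The plan is to use well-pointedness in the form it is really needed: to prove an equality of two $\D$-morphisms out of an object of the shape $\Gamma \actD W$, it suffices to show they agree after applying $\pointfun{-}$, i.e.\ that they have the same value on every point $\gamma \in \V(I, \Gamma)$. I would organize everything around this principle and dispatch the easy parts first. For the ``only if'' direction, given a strength $\str$ I would verify condition~(1) by instantiating naturality of $\str$ in $\Gamma$ at a point $\gamma : I \to \Gamma$ and combining it with the unit law; the unit law gives $\str_{I,X} \compose \lambda_{FX}^{-1} = F\lambda_X^{-1}$, and feeding this into the naturality equation collapses $\pointfun{\str_{\Gamma,X}}\gamma$ to $F((\gamma \actC X) \compose \lambda_X^{-1}) = F(\pointfun{\id_{\Gamma \actC X}}\gamma)$. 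The two uniqueness claims are then immediate: condition~(1) fixes $\pointfun{\str_{\Gamma,X}}$ entirely, so well-pointedness leaves at most one $\str_{\Gamma,X}$, and since any strength satisfies~(1), any two strengths coincide.

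The substance is the ``if'' direction: assuming morphisms $\str_{\Gamma,X}$ satisfying~(1), I must check that they constitute a strength. Setting $\gamma = \id_I$ in~(1) gives $\str_{I,X} = F\lambda_X^{-1} \compose \lambda_{FX}$, which is exactly the unit law. For naturality in $\Gamma$ and in $X$, I would reduce each square to an equality after $\pointfun{-}$; using~(1), bifunctoriality of $\actC$ and $\actD$ in their first arguments, and naturality of $\lambda$, both sides rewrite as $F$ applied to a $\C$-morphism, and the two resulting $\C$-morphisms agree by the same bifunctoriality and naturality of $\lambda$.

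The hard part will be the associativity coherence law. The obstacle is that a naive application of $\pointfun{-}$ over $\Gamma' \tensor \Gamma$ forces one to range over arbitrary points $\delta : I \to \Gamma' \tensor \Gamma$, and such a $\delta$ need not factor through $\tensor$ as some $\gamma' \tensor \gamma$, so there is no way to commute $\alpha_{\Gamma',\Gamma,FX}$ past $\delta \actD FX$. My way around this is to exploit that $\alpha$ is invertible and transpose the coherence square into an equality of morphisms whose common source is $\Gamma' \actD (\Gamma \actD FX)$; then it is enough to test on points $\gamma' : I \to \Gamma'$ alone. On that form, the left-hand side $\pointfun{\str_{\Gamma',\Gamma \actC X} \compose (\Gamma' \actD \str_{\Gamma,X})}\gamma'$ reduces, via naturality in $\Gamma$, the unit relation for $\str_{I,-}$, and naturality of $\lambda$, to $F((\gamma' \actC (\Gamma \actC X)) \compose \lambda_{\Gamma \actC X}^{-1}) \compose \str_{\Gamma,X}$. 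The right-hand side, using naturality of $\alpha$ in its first argument, naturality of $\str$ in $\Gamma$ at the $\V$-morphisms $\gamma' \tensor \Gamma$ and $\lambda_\Gamma^{-1}$, and the action triangle identity to simplify $\alpha_{I,\Gamma,FX}^{-1} \compose \lambda_{\Gamma \actD FX}^{-1}$ to $\lambda_\Gamma^{-1} \actD FX$, reduces to $F(\alpha_{\Gamma',\Gamma,X} \compose ((\gamma' \tensor \Gamma) \actC X) \compose (\lambda_\Gamma^{-1} \actC X)) \compose \str_{\Gamma,X}$. What remains is a single identity in $\C$, namely $\alpha_{\Gamma',\Gamma,X} \compose ((\gamma' \tensor \Gamma) \actC X) \compose (\lambda_\Gamma^{-1} \actC X) = (\gamma' \actC (\Gamma \actC X)) \compose \lambda_{\Gamma \actC X}^{-1}$, which follows from naturality of $\alpha$ for $\actC$ in its first argument together with the triangle identity for $\actC$ (the latter giving $\alpha_{I,\Gamma,X} \compose (\lambda_\Gamma^{-1} \actC X) = \lambda_{\Gamma \actC X}^{-1}$).

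Finally, for the statement about natural transformations, I would use \cref{prop:strong-functors-strength} to reduce strength of $\tau : G \natto H$ to commutativity of the square relating the strengths $\str^G$, $\str^H$ and $\tau$. Testing on points $\gamma$ and applying~(1) for both $G$ and $H$, together with naturality of $\lambda$, turns the two composites into $H((\gamma \actC X) \compose \lambda_X^{-1}) \compose \tau_X$ and $\tau_{\Gamma \actC X} \compose G((\gamma \actC X) \compose \lambda_X^{-1})$; these agree by ordinary naturality of $\tau$ at the $\C$-morphism $(\gamma \actC X) \compose \lambda_X^{-1}$, so well-pointedness yields the square and hence strength of $\tau$. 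The one genuinely delicate point throughout is the associativity law, and the key idea is to pre-transpose along the invertible $\alpha$ so that points of $\Gamma'$ suffice.
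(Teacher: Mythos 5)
Your proposal is correct and takes essentially the same route as the paper's proof: the only-if chain (naturality in $\Gamma$ at a point plus the unit law), uniqueness from well-pointedness, and verification of the if-direction and of strong naturality by testing each axiom under $\pointfun{{-}}$. The paper compresses the if-direction into ``consider the image of each axiom under $\pointfun{{-}}$ and then calculate''; your observation that the associativity axiom must first be transposed along the invertible $\alpha$ so that points of $\Gamma'$ suffice---since a point $\delta : I \to \Gamma' \tensor \Gamma$ need not decompose as $(\gamma' \tensor \gamma) \compose \rho_I$---is exactly the detail needed to make that calculation go through, so your account is a faithful and more explicit version of the paper's argument.
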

\begin{proof}
  If a family of morphisms $\str$ is a strength, then
  \[
    \pointfun{\str_{\Gamma, X}}\gamma
    =
    \str_{\Gamma, X} \compose (\gamma \act_\D F X) \compose \lambda^{-1}_{FX}
    = 
    F(\gamma \act_{\C} X) \compose \str_{I, X}
      \compose \lambda^{-1}_{FX}
    =
    F(\gamma \act_{\C} X) \compose F \lambda^{-1}_X
    = F(\pointfun{\id_{\Gamma \actC X}} \gamma)
  \]
  using naturality in $\Gamma$ and the strength axiom for $\lambda$.
  Well-pointedness therefore implies uniqueness.
  To show that every family of morphisms $\str$ satisfying
  the condition 
  is a strength, it suffices by
  well-pointedness to consider the image of each axiom under
  $\pointfun{{-}}$
 and then calculate.  
  A similar proof shows that every $\tau$ is strong.
\end{proof}

\begin{example}
  The smash product $\tensor$ of pointed sets is a well-pointed action of
  $(\PSet, I, \tensor)$ on $\PSet$.
  In this case, the unit $I$ is the two-element pointed set
  $\{\star, 1\}$; since morphisms $I \to \Gamma$ send $\star$ to
  $\star$, points of $\Gamma$ are in bijection with elements of
  $\Gamma$, and so, for any $\zeta$, there is at most one $f$ with
  $\pointfun f = \zeta$.
  In contrast, the Cartesian product of pointed sets is not
  well-pointed, because every $\Gamma$ has only one point.

  The Cartesian product of posets is a well-pointed action of
  $(\Poset, \1, \times)$ on itself, again because points of $\Gamma$ are
  in bijection with elements of $\Gamma$.
  If $F$ is an endofunctor on $\Poset$, then any strength for $F$ would
  necessarily be given by
  $\str_{\Gamma, X} (\gamma, t) = F (\lambda x.\,(\gamma, x)) t$;
  this forms a strength for $F$ exactly when it is monotone.
  This property enables us to show that some functors have no strength.
  For example, the functor $|{-}| : \Poset \to \Poset$ that
  sends $(X, \le)$ to the discrete poset $(X, =)$ has no
  strength because $\str_{\Gamma, X}$
  is not a monotone function $\Gamma \times |X| \to |\Gamma \times X|$.
\end{example}

For existence of strengths, we introduce the following.
\begin{definition}
  A \emph{weak functional completeness structure} $\Phi$ for $\actD$ is
  an assignment of a $\D$-morphism
  $\Phi_\Gamma \zeta : \Gamma \act X \to Y$
  satisfying $\pointfun{\Phi_\Gamma \zeta} = \zeta$
  to each function $\zeta : \V(I, \Gamma) \to \D(X, Y)$.
  We require this assignment to be natural in $\Gamma$, $X$, $Y$, and to satisfy
\[
    \Phi_{\Gamma' \tensor \Gamma} \zeta
    =
    \Phi_{\Gamma'} (\lambda \gamma'.\,
      \Phi_{\Gamma} (\lambda \gamma.\
        \zeta ((\gamma' \tensor \gamma) \compose \rho_I)))
          \compose \alpha_{\Gamma', \Gamma, X}
    \qquad \textrm{for~} \zeta : \V(I, \Gamma' \tensor \Gamma) \to \D(X, Y)
\]
\end{definition}

If $\actD$ is well-pointed, then there can be at most one weak functional
completeness structure $\Phi$ for $\actD$, because morphisms $f$
such that $\pointfun{f} = \zeta$ are unique if they
exist;
such a $\Phi$ exists exactly when $\actD$ is functionally complete.
If $\actD$ is not well-pointed, then in general there can be several weak
functional completeness structures for $\actD$.

\begin{proposition}\label{prop:strengths-exist}
  Let $\Phi$ be a weak functional completeness structure for the action $\actD$.
  For every functor $F : \C \to \D$, there is a strong functor
  $\hat F$ such that $\under{\hat F} = F$; this is defined on morphisms
  $f : \Gamma \actC X \to Y$ by
  $
    \hat F \ctx{\Gamma} f
      = \Phi_\Gamma(\lambda \gamma.\,F(\pointfun{f}\gamma))
  $.
  Moreover, every natural transformation
  $\alpha : F \natto G$ is a strong natural transformation
  $\hat F \natto \hat G$, for $\hat F$ and $\hat G$ thus constructed.
\end{proposition}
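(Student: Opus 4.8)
The plan is to adopt the displayed formula $\hat F\ctx{\Gamma} f = \Phi_\Gamma(\lambda\gamma.\,F(\pointfun f\gamma))$ as the definition of $\hat F$ on morphisms, and then check in turn the three requirements of a strong functor (naturality of $\hat F\ctx{\Gamma}$ in $\Gamma$, the unit axiom, the composition axiom) together with $\under{\hat F}=F$. The formula is well-typed: $\pointfun f : \V(I,\Gamma)\to\C(X,Y)$, so $\lambda\gamma.\,F(\pointfun f\gamma) : \V(I,\Gamma)\to\D(FX,FY)$, which $\Phi_\Gamma$ sends to a morphism $\Gamma\actD FX\to FY$. I expect all but the composition axiom to be routine.

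For naturality in $\Gamma$ I would first record the elementary identity $\pointfun{f\compose(h\actC X)}\gamma = \pointfun f(h\compose\gamma)$ for $h:\Gamma_1\to\Gamma_2$, which is immediate from bifunctoriality of $\actC$; the required square for $\hat F\ctx{-}$ then follows from naturality of $\Phi$ in $\Gamma$. For the unit axiom and for $\under{\hat F}=F$ I would first derive the special case $\Phi_I\zeta = \zeta(\id_I)\compose\lambda_X$, which is not postulated for a weak functional completeness structure but follows from $\pointfun{\Phi_I\zeta}=\zeta$ by evaluating at the point $\id_I$, giving $\zeta(\id_I)=\Phi_I\zeta\compose\lambda_X^{-1}$. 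With this in hand both computations collapse, since $\pointfun{\lambda_X}\id_I=\id_X$ and $\pointfun{f\compose\lambda_X}\id_I=f$, and functoriality of $F$ finishes them.

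The composition axiom is the crux. I would expand its left-hand side using the $\tensor$-equation of $\Phi$, turning $\hat F\ctx{\Gamma'\tensor\Gamma}(g\compose(\Gamma'\actC f)\compose\alpha_{\Gamma',\Gamma,X})$ into a doubly-nested expression $\Phi_{\Gamma'}(\lambda\gamma'.\,\Phi_\Gamma(\lambda\gamma.\,F(\dots)))\compose\alpha_{\Gamma',\Gamma,FX}$ whose innermost argument is $F$ applied to $\pointfun{g\compose(\Gamma'\actC f)\compose\alpha_{\Gamma',\Gamma,X}}((\gamma'\tensor\gamma)\compose\rho_I)$. The heart of the matter is the coherence lemma
\[
  \pointfun{g\compose(\Gamma'\actC f)\compose\alpha_{\Gamma',\Gamma,X}}((\gamma'\tensor\gamma)\compose\rho_I) = \pointfun g\gamma'\compose\pointfun f\gamma,
\]
stating that a ``product point'' applied to the context-extended composite is the composite of the point-applications. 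I would prove it by a diagram chase: factor $\delta\actC X$ (with $\delta=(\gamma'\tensor\gamma)\compose\rho_I$) through $\actC$, push $\alpha_{\Gamma',\Gamma,X}$ past $(\gamma'\tensor\gamma)\actC X$ by naturality of $\alpha$, simplify the resulting $\alpha_{I,I,X}\compose(\rho_I\actC X)$ to $I\actC\lambda_X^{-1}$ using the right-unit triangle of the action (the second identity of the first coherence diagram, at $\Gamma=I$), and then finish with bifunctoriality of $\actC$ and naturality of $\lambda^{-1}$; both sides reduce to $g\compose(\gamma'\actC\pointfun f\gamma)\compose\lambda_X^{-1}$. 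Given the lemma, the rest is bookkeeping: functoriality of $F$ splits $F(\pointfun g\gamma'\compose\pointfun f\gamma)$ as $F(\pointfun g\gamma')\compose F(\pointfun f\gamma)$; naturality of $\Phi$ in its codomain pulls $F(\pointfun g\gamma')$ out of the inner $\Phi_\Gamma$, leaving $\hat F\ctx{\Gamma}f$; and naturality of $\Phi$ in its domain pulls $\hat F\ctx{\Gamma}f$ out of the outer $\Phi_{\Gamma'}$ as the factor $\Gamma'\actD\hat F\ctx{\Gamma}f$, giving exactly the right-hand side.

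For the claim about natural transformations I would verify the strong-naturality equation $\tau_Y\compose\hat F\ctx{\Gamma}f = \hat G\ctx{\Gamma}f\compose(\Gamma\actD\tau_X)$ directly, with no further coherence input: move $\tau_Y$ inside $\Phi_\Gamma$ by codomain-naturality, rewrite $\tau_Y\compose F(\pointfun f\gamma)$ as $G(\pointfun f\gamma)\compose\tau_X$ by naturality of $\tau$, and move $\tau_X$ back out by domain-naturality of $\Phi$ as the factor $\Gamma\actD\tau_X$. (Equivalently, one could verify the commuting square of \cref{prop:strong-functors-strength} for the induced strengths.) I therefore expect the coherence lemma above to be the only genuine difficulty.
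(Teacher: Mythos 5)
The paper states \cref{prop:strengths-exist} without proof, so there is no authorial argument to compare yours against; your proposal is correct, and it is evidently the intended argument, since the $\tensor$-law imposed on a weak functional completeness structure is tailor-made to yield the composition axiom in exactly the way you use it. The two places where care is genuinely needed both check out. First, the $I$-law $\Phi_I \zeta = \zeta(\id_I) \compose \lambda_X$ is indeed not part of the definition of a weak functional completeness structure (only the $\tensor$-law and naturality are postulated), but it follows from the section property $\pointfun{\Phi_I \zeta} = \zeta$ evaluated at $\id_I$, using $\id_I \actD X = \id_{I \actD X}$; this disposes of the unit axiom (via $\pointfun{\lambda_X}\,\id_I = \id_X$) and of $\under{\hat F} = F$ (via $\pointfun{f \compose \lambda_X}\,\id_I = f$) exactly as you describe. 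Second, your coherence lemma is right: writing $\delta = (\gamma' \tensor \gamma) \compose \rho_I$, naturality of $\alpha$ rewrites $\alpha_{\Gamma',\Gamma,X} \compose ((\gamma' \tensor \gamma) \actC X)$ as $(\gamma' \actC (\gamma \actC X)) \compose \alpha_{I,I,X}$, the right-unit coherence of the action at $\Gamma = I$ gives $\alpha_{I,I,X} \compose (\rho_I \actC X) = I \actC \lambda_X^{-1}$, and bifunctoriality of $\actC$ together with naturality of $\lambda$ reduce both sides to $g \compose (\gamma' \actC \pointfun{f}\gamma) \compose \lambda_X^{-1}$. With the lemma in hand, naturality of $\Phi$ in the codomain pulls $F(\pointfun{g}\gamma')$ out of the inner $\Phi_\Gamma$ and naturality in the domain pulls $\hat F\ctx{\Gamma} f$ out of the outer $\Phi_{\Gamma'}$ as $\Gamma' \actD \hat F\ctx{\Gamma} f$, giving the right-hand side of the composition axiom; the same two naturality moves give strong naturality of any $\tau$, and naturality of $\hat F\ctx{\Gamma}$ in $\Gamma$ is, as you say, immediate from $\pointfun{f \compose (h \actC X)}\gamma = \pointfun{f}(h \compose \gamma)$ and $\Gamma$-naturality of $\Phi$. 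Your construction is also consistent with the paper's remark following the proposition that this canonical strength need not respect identities and composition of strong functors.
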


Weak functional completeness does not necessarily deliver the
canonical strengths for the identity functor and the composition of
functors with strength. 

\begin{example}
  We construct a weak functional completeness structure for
  $(\PSet, \1, \times)$ acting on itself.
  Every $\Gamma$ has exactly one point $\star$, so to give a function
  $\zeta : \PSet(\1, \Gamma) \to \PSet(X, Y)$ is just to choose a
  morphism $\zeta(\star) : X \to Y$ of pointed sets.
  We can therefore define the morphism $\Phi_\Gamma \zeta : \Gamma \times X \to Y$
  by $\Phi_\Gamma \zeta (\gamma, x) = \zeta(\star)(x)$.
  This is in fact the only possible $\Phi$ in this case, even though the action fails to be well-pointed.
  By \cref{prop:strengths-exist}, every endofunctor $F$ on pointed sets forms a strong
  functor with
  $
    \hat F \ctx \Gamma f\,(\gamma, t)
      = F (\lambda x.\,f(\star, x))\,t
  $; this corresponds to the strength
  $\str_{\Gamma,X} (\gamma, t) = F(\lambda x.\,(\star, x))\,t$.
  There may in general be other strengths for $F$.
  For example, the identity functor on $\PSet$ also has the canonical strength
  $\id_{\Gamma\times X} : \Gamma \times X \to \Gamma \times X$.
\end{example}

\begin{example}
  We give an example of an action that has multiple weak functional
  completeness structures.
  Fix a set $E$, and let $\D$ be the Kleisli category of the monad
  ${-} + E$ on $\Set$: objects are sets, and morphisms $f \in \D(X, Y)$
  are functions $f : X \to Y + E$; the identities are the left
  coprojections $\inl$, and the composition of
  $f : X \to Y + E$ with $g : Y \to Z + E$ is
  $\cotuple{g, \inr} \compose f : X \to Z + E$.
  This category is coCartesian, as is the Kleisli category of any monad on any coCartesian category. 
  It therefore forms a monoidal category $(\D, \0, +)$, which acts on
  itself.
  Every $\Gamma$ has exactly one point $\cotuple{}$ because $\0$ is initial, so a
  function $\zeta : \D(\0, \Gamma) \to \D(X, Y)$ just chooses a single
  function $\zeta{(\cotuple{})} : X \to Y + E$.
  For each $e \in E$, we therefore have a morphism
  $
    \Phi^e_\Gamma \zeta = \cotuple{\inr \compose e \compose \tuple{}, \zeta{(\cotuple{})}}
      \in \D(\Gamma + X, Y)
  $,
  and $\Phi^e$ is a weak functional completeness structure.
  Hence in general, there is more than one $\Phi$.
\end{example}

We note that, if $\D$ has copowers over $\Set$, then well-pointedness
of $\D$ amounts to the canonical morphisms
$\V(I,\Gamma) \bullet X \to \Gamma \act X$ being epimorphisms,
while weak functional completeness amounts to the monoidal natural
transformation $\V(I,{-}) \bullet ({=}) \to ({-})\act ({=})$ being a split
monomorphism (in the category of lax monoidal functors $\V \to [\D,\D]$).
Functional completeness is equivalent to the latter being an
isomorphism.

\section{Strong monads}
\label{sec:strong-monads}

We now turn to strong monads.
There is a richer collection of equivalent definitions of strong monad
than there is of strong functor.
Because of our focus on semantics, the primary definition we use asks
for a strong Kleisli extension operator $\extend{({-})}$, as in the
introduction.
Again we work in the action-based setting, so we suppose a monoidal
category $\V$ that acts on a category $\C$.
We drop the subscript on the action, writing $\act$ instead of $\actC$.

\begin{definition}
  A \emph{strong monad} $\T = (T, \eta, \extend{({-})})$ consists of an
  object $TX \in \C$ and morphism $\eta_X : X \to TX$ for each
  $X \in \C$, and a morphism $\extend f : \Gamma \act TX \to TY$ for
  each $f : \Gamma \act X \to TY$, such that $\extend{({-})}$ is natural
  in $\Gamma \in \V$ and
\[
\begin{array}{@{}c@{\qquad}l}
    \extend{(\eta_X \compose \lambda_X)}
    = \lambda_{TX}
    & \textrm{for~} X \in \C
    \\
    \extend f \compose (\Gamma \act \eta_X)
    = f
    & \textrm{for~} f : \Gamma \act X \to TY
    \\
    \extend g \compose (\Gamma' \act \extend f)
      \compose \alpha_{\Gamma', \Gamma, TX}
    =
    \extend{(\extend g \compose (\Gamma' \act f)
      \compose \alpha_{\Gamma', \Gamma, X})}
    & \textrm{for~} f : \Gamma \act X \to TY$, $g : \Gamma' \act Y \to TZ
\end{array}
\]
  If $\S$ and $\T$ are strong monads, then a
  \emph{strong monad morphism} $\tau : \S \to \T$ consists of a
  morphism $\tau_X : SX \to TX$ for each $X \in \C$, such that
  $\tau_X \compose \eta_X = \eta_X$ for each $X \in \C$ and
  $
    \tau_Y \compose \extend{f}
      = \extend{(\tau_Y \compose f)} \compose (\Gamma \act \tau_X)
  $
  for each $f : X \to SY$.
\end{definition}

The morphisms $\eta$ are collectively called the unit of $\T$,
and $\extend{({-})}$ is the Kleisli extension.
If $\T = (T, \eta, \extend{({-})})$ is a strong monad, then the
assignment on objects $T$ extends to a strong functor with
$T\ctx{\Gamma} f = \extend{(\eta_Y \compose f)} : \Gamma \act TX \to TY$
for $f : \Gamma \act X \to Y$.
The unit $\eta$ is then a strong natural transformation
$\eta: \Id \natto T$, as is the \emph{multiplication}
$\mu : T \fcomp T \natto T$, given by
$\mu_X = \extend{\lambda_{TX}} \compose \lambda^{-1}_{TTX}$.
Every strong monad morphism $\tau : \S \to \T$ is a strong natural
transformation $\tau : S \natto T$. 

\begin{example}
  Consider $\Set$ with the Cartesian monoidal
  structure, acting on itself.
  The strong monad $\ListM$ on $\Set$ maps each set $X$ to the set
  $\List X$ of lists over $X$.
  The unit $\eta$ is given by the singleton lists $\eta_X x = [x]$.
  The Kleisli extension $\extend f : \Gamma \times \List X \to \List Y$
  of $f : \Gamma \times X \to \List Y$ is defined by
$
    \extend{f} (\gamma, [x_1, \dots, x_n])
      = f (\gamma, x_1) \append
        \cdots \append f (\gamma, x_n)
$
  where $\append$ is concatenation of lists.
  The strong functor $\List$ is the strong functor defined in
  \cref{example:list-functor} (and the ordinary functor
  $\under\List$ is then the usual list endofunctor on $\Set$).
\end{example}

If $\T = (T, \eta, \extend{({-})})$ is a strong monad on $\C$, then the
underlying functor $\under T$ forms an ordinary monad
$\under \T = (\under T, \eta, \mu)$ on $\C$ (where the multiplication
$\mu$ is defined as above).
Every strong monad morphism $\tau : \S \to \T$ is a monad morphism
$\tau : \under \S \to \under \T$.

We now give several equivalent characterizations of strong monads.
In addition to the definition above, strong monads can be defined in
terms of strong functors, in terms of strengths, and also by lifting the
action of $\V$ to the Kleisli category.
\begin{proposition}
  For each monad $\T = (T, \eta, \mu)$ on $\C$ there are bijections
  between
  \begin{enumerate}
    \item strong monads $\hat\T$ such that $\under{\hat\T} = \T$;
    \item strong functors $\hat T$ such that $\under{\hat T} = T$ and
      such that $\eta$ and $\mu$ are strong natural
      transformations $\Id \natto \hat T$ and $\hat T \fcomp \hat T \natto \hat T$;
    \item strengths $\str$ for the functor $T$, such that the following
      diagrams commute:
      \[
        \begin{tikzcd}
          \Gamma \act X
          \arrow[r, "\Gamma \act \eta_X"]
          \arrow[rd, "\eta_{\Gamma \act X}"'] &
          \Gamma \act TX
          \arrow[d, "\str_{\Gamma, X}"] \\
          &
          T(\Gamma \act X)
        \end{tikzcd}
        \qquad
        \begin{tikzcd}[column sep=large]
          \Gamma \act TTX
          \arrow[r, "\str_{\Gamma, TX}"]
          \arrow[d, "\Gamma \act \mu_X"'] &
          T(\Gamma \act TX)
          \arrow[r, "T\str_{\Gamma, X}"] &
          T(T(\Gamma \act X))
          \arrow[d, "\mu_{\Gamma \act X}"] \\
          \Gamma \act TX
          \arrow[rr, "\str_{\Gamma, X}"'] &&
          T(\Gamma \act X)
        \end{tikzcd}
      \]
    \item liftings of $\act$ to the Kleisli category of $\T$, i.e.\
      actions $\act_{\T}$ of $\V$ on $\Kl \T$ such that the following
      diagram commutes (up to equality, where $K_\T$ is the Kleisli
      inclusion):
      \[
         \begin{tikzcd}
           \V \times \C
           \arrow[r, "\act"]
           \arrow[d, "\V \times K_\T"'] &
           \C
           \arrow[d, "K_\T"] \\
           \V \times \Kl \T
           \arrow[r, "\act_\T"'] &
           \Kl \T
         \end{tikzcd}
      \]
  \end{enumerate}
  If $\S, \T$ are monads on $\C$ equipped with the equivalent data from
  this bijection, then the following conditions on monad morphisms
  $\tau : \S \to \T$ are equivalent: (1) $\tau$ is a strong monad
  morphism $\hat \S \to \hat \T$; (2) $\tau$ is a strong natural
  transformation $\hat S \to \hat T$; (3) $\tau$ makes the diagram on
  the left below commute; (4) $\tau$ makes the diagram on the right
  below commute.
  \[
    \begin{tikzcd}
      \Gamma \act SX
      \arrow[r, "\Gamma \act \tau_X"]
      \arrow[d, "\str_{\Gamma, X}"'] &
      \Gamma \act TX
      \arrow[d, "\str_{\Gamma, X}"] \\
      S(\Gamma \act X)
      \arrow[r, "\tau_{\Gamma \act X}"'] &
      T(\Gamma \act X)
    \end{tikzcd}
    \qquad
    \begin{tikzcd}
      \V \times \Kl \S
      \arrow[r, "\act_\S"]
      \arrow[d, "\V \times \Kl \tau"'] &
      \Kl \S
      \arrow[d, "\Kl \tau"] \\
      \V \times \Kl \T
      \arrow[r, "\act_\T"'] &
      \Kl \T
    \end{tikzcd}
  \]
\end{proposition}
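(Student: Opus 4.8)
The plan is to route all four bijections through the strength presentation (3), using the functor/strength bijection of \cref{prop:strong-functors-strength} as the hub and establishing each remaining equivalence by an explicit pair of translation formulas, checking in each case that the defining axioms correspond. The equivalence of (2) and (3) is then almost immediate: \cref{prop:strong-functors-strength} already gives the bijection between strong functors $\hat T$ with $\under{\hat T} = T$ and strengths $\str$ for $T$, so it only remains to see that the conditions ``$\eta$ and $\mu$ are strong'' translate into the two diagrams of (3). For $\eta : \Id \natto \hat T$ I would invoke the morphism half of \cref{prop:strong-functors-strength}, using that the identity strong functor has strength $\id_{\Gamma \act X}$; its square is exactly the left diagram of (3). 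For $\mu$ I would first read off the strength of the composite from $(G \fcomp F)\ctx{\Gamma} f = G\ctx{\Gamma}(F\ctx{\Gamma} f)$, namely $\str^{TT}_{\Gamma,X} = T\str_{\Gamma,X} \compose \str_{\Gamma, TX}$, after which the square for $\mu : \hat T \fcomp \hat T \natto \hat T$ becomes precisely the right diagram of (3).

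For (1) $\leftrightarrow$ (3) I would make the two translations explicit and mutually inverse. From a strong monad the text already builds a strong functor by $T\ctx{\Gamma} f = \extend{(\eta_Y \compose f)}$, so the associated strength is $\str_{\Gamma,X} = \extend{\eta_{\Gamma \act X}}$; conversely, from a strength satisfying the diagrams of (3) I would define $\extend f = \mu_Y \compose Tf \compose \str_{\Gamma,X}$ for $f : \Gamma \act X \to TY$. The substance is verifying that the three strong-monad axioms (the $\lambda$-law, $\extend f \compose (\Gamma \act \eta_X) = f$, and associativity) are equivalent to the strength axioms together with the two diagrams of (3); this is the strong-setting analogue of the classical equivalence between Kleisli triples and monads, with naturality of $\extend{({-})}$ and of $\str$ in $\Gamma$ handling the extra bookkeeping, and with the inverse property reducing to the monad laws.

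For (3) $\leftrightarrow$ (4), commutativity of the square forces any lifting $\act_\T$ to agree with $\act$ on objects and on morphisms coming from $\C$, so the only freedom is its value on a Kleisli morphism. I would set $\Gamma \act_\T f = \str_{\Gamma, Y} \compose (\Gamma \act f) : \Gamma \act X \to T(\Gamma \act Y)$ for $f : X \to TY$ in $\Kl\T$, and recover the strength from a lifting as $\str_{\Gamma, X} = \Gamma \act_\T \id_{TX}$, reading $\id_{TX}$ as the Kleisli morphism $TX \to X$. The work is to match functoriality of $\act_\T$ (preserving Kleisli identities and composition) and the action coherence laws with the strength axioms and the diagrams of (3): the identity part uses the left diagram and the composition part uses the right one.

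For the morphism statement, (2) $\leftrightarrow$ (3) is literally the morphism half of \cref{prop:strong-functors-strength}, the displayed square being the strength-compatibility square. I would get (1) $\leftrightarrow$ (2) from $\extend f = \mu_Y \compose Tf \compose \str_{\Gamma,X}$: since $\tau$ is already a monad morphism, combining this with the strength square yields $\tau_Y \compose \extend f = \extend{(\tau_Y \compose f)} \compose (\Gamma \act \tau_X)$ by naturality, and conversely that condition forces the square; and (4) is translated through $\Gamma \act_\T f = \str_{\Gamma,Y} \compose (\Gamma \act f)$ exactly as in (3) $\leftrightarrow$ (4). I expect the main obstacle to be the associativity/composition bookkeeping in both directions---matching the third strong-monad axiom and the functoriality of $\act_\T$ under Kleisli composition against the single right-hand diagram of (3). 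Keeping (3) as the common reference point, and proving each correspondence against it rather than directly between (1), (2) and (4), is what keeps these diagram chases manageable.
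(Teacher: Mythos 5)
Your proposal is correct and takes essentially the same route as the paper: the paper also obtains (2)$\leftrightarrow$(3) as a special case of \cref{prop:strong-functors-strength} with the two diagrams corresponding to strong naturality of $\eta$ and $\mu$, passes between (1) and the functor/strength presentations via $\extend{f} = \mu_Y \compose \hat T\ctx{\Gamma} f$ (your $\mu_Y \compose Tf \compose \str_{\Gamma,X}$ is the same formula unfolded), and uses exactly your translations $\Gamma \act_\T f = \str_{\Gamma,Y} \compose (\Gamma \act f)$ and $\str_{\Gamma,X} = \Gamma \act_\T \id_{TX}$ for (3)$\leftrightarrow$(4), with the morphism conditions likewise read off from these bijections. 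Your choice to hub everything at (3) rather than chaining (1)$\leftrightarrow$(2)$\leftrightarrow$(3) is only a cosmetic reorganization of the same argument.
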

\begin{proof}
  For the bijection between (1) and (2), strong monads induce strong
  functors as above.
  If $\hat T$ is a strong functor with $\under{\hat T} = T$, then
  the unit of $\hat \T$ is $\eta$ and
  the Kleisli extension is given
  by $\extend{f} = \mu_Y \compose \hat T\ctx{\Gamma} f$ for
  $f : \Gamma \act X \to TY$.
  The bijection between (2) and (3) is a special case of
  \cref{prop:strong-functors-strength}, in particular, the two
  diagrams in (3) correspond to $\eta$ and $\mu$ being strong. 
  To go from (3) to (4), define the action $\act_{\T}$ on objects by
  $\Gamma \act_{\T} X = \Gamma \act X$, on $\V$-morphisms by
  $\sigma \act_{\T} X = \sigma \act X$, and on morphisms
  $f \in \Kl \T (X, Y) = \C(X, TY)$ by
  $
    \Gamma \act_{\T} f
      = \str_{\Gamma, Y} \compose (\Gamma \act f)
      \in \Kl \T(\Gamma \act X, \Gamma \act Y)
  $.
  To go from (4) to (3), use $\id_{TX} \in \Kl \T(TX, X)$ to define
  $
    \str_{\Gamma, X} = \Gamma \act_{\T} \id_{TX}
      \in \Kl \T(\Gamma \act TX, \Gamma \act X)
  $.
  The equivalence of the conditions on monad morphisms follows from the
  definition of each bijection.
\end{proof}

The fourth characterization of strong monad is important because of its connection
with the semantics of call-by-value languages in \emph{Freyd
categories}~\cite{power1999closed}.
Indeed, one possible definition of Freyd category explicitly requires
such an action $\act_\T$~\cite{levy2001call}.
The Kleisli inclusion $K_\T$ forms a strong functor with $\act_\T$ as
the action of $\V$ on $\Kl \T$, as does its right adjoint.
If $\tau : \S \to \T$ is a strong
monad morphism, then $\Kl \tau : \Kl \S \to \Kl \T$ also forms a strong
functor.

We again emphasize that strength is additional structure a monad can be
equipped with, not merely a property.
Some monads admit multiple strengths and some admit no strength at all.

\begin{example}\label{example:writer}
  Suppose a monoid $\M = (M, 1, \cdot)$ in $\Set$, and consider the product of right $\M$-actions as an action of the
  Cartesian monoidal category $\Act{\M}$ on itself.
  Equipping $M$ with the discrete action $m \iact m' = m$ 
  makes $\M$ into a monoid in $\Act{\M}$.
  The $\M$-writer monad $\WrM \M$ on $\Act{\M}$ is the functor
  $\Wr M = {-} \times M$ equipped with unit $\eta_X x = (x, 1)$ and
  multiplication $\mu_X ((x, m'), m) = (x, m \cdot m')$.
  If $\M$ is commutative, then $\WrM \M$ forms a strong monad in at least
  two ways.
  As for every writer monad on a monoidal category, the inverse of the
  associator is a strength
  $\str_{\Gamma, X} (\gamma, (x, m)) = ((\gamma, x), m)$; the bijections
  above induce a strong monad in which the Kleisli extension of
  $f : \Gamma \times X \to \Wr M Y$ is given by
  $\extend f (\gamma, (x, m)) = (y, m \cdot m')$ where
  $(y, m') = f(\gamma, x)$.
  Using commutativity, there is also a second strength
  $\str'_{\Gamma, X} (\gamma, (x, m)) = ((\gamma \iact m, x), m)$;
  this induces a strong monad with Kleisli extension
  $\extend f (\gamma, (x, m)) = (y, m \cdot m')$ where
  $(y, m') = f(\gamma \iact m, x)$.

  This example can also be adjusted for the product of sets as an
  action of $(\Act{\M}, \1, \times)$ on $\Set$. In this case,
  commutativity of $\M$ is not needed.
\end{example}

\subsection{Free monads on strong endofunctors}

It is frequently useful to be able to construct the free monad $\T$ on
an endofunctor $F$.
In general, a strength for $F$ will not induce a strength for $\T$; we
give a sufficient condition for this to be the case below.
First we note that, for many applications (even without strength), 
$\T$ being
free (as in \emph{free object}) is not enough.
One often wants the monad $\T$ to be
\emph{algebraically free}~\cite{kelly1980unified}, meaning there is an
isomorphism $\TAlg{\T} \iso \FAlg{F}$ that commutes with the forgetful
functors.
(We write $\TAlg{\T}$ for the Eilenberg-Moore category of the monad
$\T$, and $\FAlg{F}$ for the category of algebras of the functor $F$.)
Algebraic freeness, thus defined, is not a universal property, 
but it still identifies a monad up to a unique isomorphism.
Algebraically free implies free; the converse holds if $\C$ is complete
\cite[Proposition~22.4]{kelly1980unified}.



In general, even the algebraically free monad will not be strong when $F$
has a strength.
To obtain a strength for the monad, we need to refine the notion of free
algebra.
Several versions of the following notion have appeared in the literature
before (for example~\cite{cockett1992strong,pirog2016eilenberg,fiore2017list}).
\begin{definition}\label{def:strongly-free}
  If $F$ is a strong endofunctor on $\C$, an
  $\under F$-algebra $(A, a)$ equipped with a morphism
  $f : X \to A$ is called the \emph{strongly free $F$-algebra}
  on $X \in \C$ if, for all $\Gamma \in \V$, $(B, b) \in \FAlg {\under F}$ and
  $g : \Gamma \act X \to B$,
  there is a unique morphism $h : \Gamma \act A \to B$ such that the
  following diagram commutes.
  \[
    \begin{tikzcd}[column sep=large]
      \Gamma \act X
      \arrow[r, "\Gamma \act f"]
      \arrow[dr, "g"'] 
      &
      \Gamma \act A
      \arrow[d, "h", dashed] 
      &
      \Gamma \act F A
      \arrow[d, "F\ctx{\Gamma} h"]
      \arrow[l, "\Gamma \act a"']\\
      &
      B
      &
      F B
      \arrow[l, "b"']
    \end{tikzcd}
  \]
\end{definition}
If $(A, f, a)$ is the strongly free $F$-algebra on $X$, then it
is also the free $\under F$-algebra on $X$.

\begin{proposition}
  Suppose a strong functor $F$.  If the strongly free $F$-algebra
  $(TX, \eta_X, \sigma_X)$ exists for each object $X$, then $T$ forms
  a strong monad $\T$ in which the unit is $\eta$ and the Kleisli
  extension $\extend g$ of $g : \Gamma \act X \to TY$ is the unique
  morphism $h : \Gamma \act TX \to TY$ such that
\[
    \begin{tikzcd}[column sep=large]
      \Gamma \act X
      \arrow[r, "\Gamma \act \eta_X"]
      \arrow[dr, "g"'] 
      &
      \Gamma \act TX
      \arrow[d, "h", dashed] 
      &
      \Gamma \act F(TX)
      \arrow[d, "F\ctx{\Gamma} h"]
      \arrow[l, "\Gamma \act \sigma_X"']\\
      &
      TY
      &
      F (TY)
      \arrow[l, "\sigma_Y"']
    \end{tikzcd}
  \]
  The monad $\under \T$ is algebraically free on
  $\under F$.
\end{proposition}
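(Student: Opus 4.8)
The plan is to use the universal property of the strongly free algebra---specifically its uniqueness clause---both to define the Kleisli extension and to verify each strong-monad axiom. Given $g : \Gamma \act X \to TY$, the target $(TY, \sigma_Y)$ is an $\under F$-algebra, so applying \cref{def:strongly-free} to the strongly free algebra $(TX, \eta_X, \sigma_X)$ on $X$ (with context $\Gamma$, algebra $(TY, \sigma_Y)$, and map $g$) yields a unique $h : \Gamma \act TX \to TY$ making the displayed diagram commute; I set $\extend g = h$. The two commuting faces say exactly $\extend g \compose (\Gamma \act \eta_X) = g$ and $\extend g \compose (\Gamma \act \sigma_X) = \sigma_Y \compose F\ctx{\Gamma} \extend g$, and I use these two equations throughout.

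The unit axiom $\extend f \compose (\Gamma \act \eta_X) = f$ is immediate from the first face. For $\extend{(\eta_X \compose \lambda_X)} = \lambda_{TX}$, I check that $\lambda_{TX}$ satisfies the two defining faces of the right-hand side (with $\Gamma = I$): the first reduces to naturality of $\lambda$, and the second to naturality of $\lambda$ together with the strong-functor axiom $F\ctx{I} \lambda_{TX} = \lambda_{F(TX)}$; uniqueness then forces equality. Naturality of $\extend{({-})}$ in $\Gamma$ follows the same pattern: for $\sigma : \Gamma' \to \Gamma$ I show that $\extend g \compose (\sigma \act TX)$ satisfies the defining faces of $\extend{(g \compose (\sigma \act X))}$, using bifunctoriality of $\act$ for the first face and naturality of $F\ctx{({-})}$ in $\Gamma$ for the second.

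The associativity axiom is the \emph{main obstacle}. Writing $k = \extend g \compose (\Gamma' \act \extend f) \compose \alpha_{\Gamma', \Gamma, TX}$ for the left-hand side, I verify that $k$ satisfies the two defining faces of the right-hand side $\extend{(\extend g \compose (\Gamma' \act f) \compose \alpha_{\Gamma', \Gamma, X})}$ and conclude by uniqueness. The first face follows from naturality of $\alpha$ in its last argument (applied to $\eta_X$) together with the unit axiom for $\extend f$. For the second face, naturality of $\alpha$ (applied to $\sigma_X$) followed by the defining equations for $\extend f$ and then $\extend g$ rewrites $k \compose ((\Gamma' \tensor \Gamma) \act \sigma_X)$ as $\sigma_Z \compose F\ctx{\Gamma'} \extend g \compose (\Gamma' \act F\ctx{\Gamma} \extend f) \compose \alpha_{\Gamma', \Gamma, F(TX)}$; the composition axiom for the strong functor $F$, instantiated at $\extend f$ and $\extend g$, then identifies the part to the right of $\sigma_Z$ with $F\ctx{\Gamma' \tensor \Gamma} k$, which is exactly the required right-hand side. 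This is precisely the step where the strength of $F$, and not merely the ordinary functor $\under F$, is needed.

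For the final claim, the remark after \cref{def:strongly-free} already gives that each $(TX, \eta_X, \sigma_X)$ is the free $\under F$-algebra on $X$, so $X \mapsto (TX, \sigma_X)$ is left adjoint to the forgetful functor $U : \FAlg{\under F} \to \C$, and one checks that the monad this adjunction induces is exactly $\under\T$ (the unit is $\eta$ by construction, and the multiplication agrees because $\mu_X$ is characterized by the same freeness property). It then remains to exhibit the canonical comparison functor $\FAlg{\under F} \to \TAlg{\under\T}$ as an isomorphism commuting with the forgetful functors: an $\under\T$-algebra $(B, \beta)$ is sent to the $\under F$-algebra $(B, \beta \compose \sigma_B \compose \under F \eta_B)$, and conversely an $\under F$-algebra $(B, b)$ to the algebra $\beta : TB \to B$ obtained as the unique $\under F$-algebra morphism extending $\id_B$. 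Verifying that these assignments are mutually inverse and respect morphisms is the remaining, essentially routine, work; I expect the only subtlety to be recovering a $\under\T$-algebra structure from its associated $\under F$-algebra, which uses the monad-algebra laws for $\beta$ together with uniqueness of extensions.
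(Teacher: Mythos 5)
The paper states this proposition without proof, so there is nothing to compare against: your argument is correct and is the evident intended one---define $\extend g$ by the universal property, verify each strong-monad axiom by showing both sides solve the same instance of the uniqueness clause (with the composition axiom of the strong functor $F$, instantiated at $\extend f$ and $\extend g$, carrying the associativity case, which is indeed where strength rather than mere functoriality of $\under F$ is essential), and obtain algebraic freeness from the standard mutually inverse comparison between $\FAlg{\under F}$ and $\TAlg{\under \T}$. I checked the delicate points you deferred---the second face of the associativity verification, and the round trip showing that $b = \beta \compose \sigma_B \compose \under F \eta_B$ recovers $\beta$ as the unique extension of $\id_B$ (which uses $\beta \compose \mu_B = \beta \compose \under T \beta$ together with the algebra-morphism properties of $\under T\beta$ and $\mu_B$)---and they go through exactly as you describe.
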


It is well-known that, in the presence of right adjoints to
$\Gamma \act {-}$ (in particular, when $\V$ is right closed, in the
case of $\V$ acting on itself), ordinary free algebras 
suffice to construct a strength (see for
example~\cite[Theorem~5]{fiore2008second}); we explain this result in
the context of \emph{powering} in \cref{sec:powering-free}. 
Free algebras are also strongly free when they can be 
constructed as colimits 
that are preserved by $\Gamma \act {-} : \C \to \C$ for
each $\Gamma$.
(See e.g.\ Kelly~\cite{kelly1980unified} for the construction of free
algebras as colimits.)

\subsection{Uniqueness and existence of strengths for monads}
The situation for uniqueness of strengths carries over immediately from
functors (\cref{sec:uniqueness}) to monads.
\begin{proposition}
  Suppose that $\T = (T, \eta, \mu)$ is a monad on $\C$ and that the
  action $\act$ is well-pointed.
  If the functor $T$ forms a strong functor $\hat T$ with $T = \under{\hat T}$
  (necessarily uniquely), then
  $\T$ forms a strong monad $\hat \T$ with $\under{\hat \T} = \T$ (again
  uniquely); moreover, every monad morphism between strong monads is a
  strong monad morphism.
  In particular, if $\act$ is functionally complete, then every monad
  $\T$ on $\C$ forms a strong monad in exactly one way.
\end{proposition}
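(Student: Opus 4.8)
The plan is to deduce everything from the bijection between strong monad structures and suitable strong functor structures (the earlier proposition giving four equivalent characterizations of strong monads), combined with the automatic-strongness clause of \cref{prop:strengths-unique}. Because $\act$ is well-pointed, that clause tells us that \emph{every} ordinary natural transformation between strong endofunctors on $\C$ is already a strong natural transformation; this is the engine that drives the whole argument.

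First I would record the given strong functor structure $\hat T$ on $T$, which is unique by \cref{prop:strengths-unique}. To upgrade $\hat T$ to a strong monad via the characterization $(1)\leftrightarrow(2)$, I must check that $\eta : \Id \natto \hat T$ and $\mu : \hat T \fcomp \hat T \natto \hat T$ are strong natural transformations. But $\Id$, $\hat T$, and the composite $\hat T \fcomp \hat T$ are all strong functors, and $\eta, \mu$ are ordinary natural transformations between them; hence by the well-pointedness clause they are automatically strong. Thus the data required by item $(2)$ is present, and the bijection yields a strong monad $\hat\T$ with $\under{\hat\T} = \T$. Uniqueness of $\hat\T$ follows from uniqueness of $\hat T$ together with the bijection $(1)\leftrightarrow(2)$, since the strong monad structure is completely determined by its underlying strong functor.

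For monad morphisms, let $\tau : \S \to \T$ be a monad morphism between strong monads. By the same characterization proposition, $\tau$ is a strong monad morphism precisely when it is a strong natural transformation $\hat S \natto \hat T$. Since $\tau$ is in particular an ordinary natural transformation between the strong functors $\hat S$ and $\hat T$, it is strong by the well-pointedness clause of \cref{prop:strengths-unique}; hence $\tau$ is a strong monad morphism. Finally, for the ``in particular'' statement: functional completeness implies well-pointedness (the family $\pointfun{{-}}$ is then bijective, hence injective) and, by the proposition on functionally complete actions, guarantees that $T$ has a unique strength, i.e.\ forms a strong functor; applying the first part then shows that every monad $\T$ forms a strong monad in exactly one way.

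I do not anticipate a serious obstacle, as the result is essentially a corollary of the two preceding propositions. The one point requiring care is that the strongness of $\mu$ is obtained by applying the automatic-strongness clause to a transformation \emph{out of} the composite $\hat T \fcomp \hat T$, so I would make sure to invoke the fact (recorded in the excerpt) that the composition of strong functors is again strong before appealing to \cref{prop:strengths-unique}.
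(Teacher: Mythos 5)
Your argument is correct and is exactly the one the paper has in mind: the paper leaves this proposition without an explicit proof, presenting it as an immediate consequence of the well-pointedness clause of \cref{prop:strengths-unique} (every natural transformation between strong functors is strong, applied to $\eta$ and $\mu$) together with the bijection between strong monads and strong functors with strong unit and multiplication. Your additional care about $\hat T \fcomp \hat T$ being a strong functor and about functional completeness implying well-pointedness fills in precisely the routine details the paper omits.
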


Existence of strengths for monads is more problematic.
The strengths assigned to the functor $T$ by a weak
functional completeness structure $\Phi$ will not in general make $\T$
into a strong monad.
For example, consider the Cartesian monoidal category $\PSet$ acting on
itself.
This has a single weak functional completeness structure $\Phi$ that assigns to
the identity functor on $\PSet$ the strength
$\str_{\Gamma, X} (\gamma, x) = (*, x)$.
The unit of the identity monad is not a strong
natural transformation with respect to this strength
(its domain is the identity functor with the canonical strength!), so
$\Phi$ does not
make the identity monad into a strong monad.
In fact, if the strength assigned to the identity endofunctor on $\C$ by
a weak functional completeness structure $\Phi$ for an arbitrary action
of $\V$ on $\C$ makes the identity monad into a strong monad, then it
follows that the action is functionally complete.
To see this, note that strong naturality of $\eta$ implies the identity
monad forms a strong monad in only one way: the underlying strong
functor has $\Id\ctx{\Gamma} f = f$.
If $\Phi$ makes the identity monad into a strong monad, we therefore
have
$
  f
  = \Id\ctx{\Gamma} f
  = \Phi_\Gamma{(\lambda \gamma.\,\Id(\pointfun{f}\gamma))}
  = \Phi_\Gamma \pointfun{f}
$,
so $\pointfun{{-}}$ is a bijection, which implies functional
completeness.

\section{Enrichment}\label{sec:enrichment}

So far, we have considered strength only from the perspective of actions
of the monoidal category $\V$.
A well-known result of Kock~\cite{kock1972strong} is that, in a certain situation,
strong functors are the same as enriched functors.
More precisely, Kock shows that if $\V$ is a monoidal category
that is (left) closed in the sense that each ${-} \tensor X : \V \to \V$
has a right adjoint $X \lolly {-} : \V \to \V$, then strengths
$\Gamma \tensor FX \to F(\Gamma \tensor X)$ for an endofunctor
$F : \V \to \V$ are in bijection with
suitable natural transformations $X \lolly Y \to FX \lolly FY$.
The latter make $F$ into an enriched functor $\V \to \V$ (in the sense
of enriched category theory~\cite{kelly1982basic}), where $\V$ enriches
over itself using the closed structure.

It is less well-known that this connection between enrichment and
strength holds more generally.
If $\C$ and $\D$ are any categories that enrich over $\V$, and
suitable adjoints exist, then enriched functors $\C \to \D$ are the same
as strong functors $\C \to \D$.
There are similar bijections for natural transformations, and for
monads.
Strength and enrichment are therefore just two perspectives on the same
structure.
In particular, facts from enriched category theory can be transferred
along these bijections to become facts about strength.

We give the precise connection between strength and enrichment in this
section, again working with an general monoidal category
$\V$. Again, for what we are interested in, we do not need
symmetry.

\begin{definition}
  An \emph{enrichment} of a category $\C$ over a monoidal category $(\V, I, \tensor)$ is a
  functor $\hom : \C^\op \times \C \to \V$ equipped with natural
  transformations
  \[
    \jj_X : I \to X \hom X
    \qquad
    M_{X, Y, Z} : (Y \hom Z) \tensor (X \hom Y) \to X \hom Z
  \]
  such that the functions $\jY_{X,Y} : \C(X, Y) \to \V(I, X \hom Y)$
  given by $\jY_{X,Y} f = (X \hom f) \compose \jj_X$
  are bijections and such that
  the following coherence conditions are satisfied:
\[
\begin{tikzcd}[ampersand replacement=\&]
	{I \tensor (X \hom Y)} \& {(Y \hom Y) \tensor (X \hom Y)} \\
	{X \hom Y} \& {X \hom Y} \\
	{(X \hom Y) \tensor I} \& {(X \hom Y) \tensor (X \hom X)}
	\arrow[Rightarrow, no head, from=2-1, to=2-2]
	\arrow["{M_{X, X, Y}}"', from=3-2, to=2-2]
	\arrow["{(X \hom Y) \tensor \jj_X}"', from=3-1, to=3-2]
	\arrow["{\rho_{X \hom Y}}"', from=2-1, to=3-1]
	\arrow["{\lambda_{X \hom Y}}"', from=1-1, to=2-1]
        \arrow["{\jj_Y \tensor (X \hom Y)}"{yshift=0.5ex}, from=1-1, to=1-2]
	\arrow["{M_{X, Y, Y}}", from=1-2, to=2-2]
\end{tikzcd}
\hspace{-0.4em}
\begin{tikzcd}[column sep=-4.1em, row sep=1.6em, ampersand replacement=\&]
	{((Y \hom Z) \tensor (X \hom Y)) \tensor (W \hom X)} \&\& {(X \hom Z) \tensor (W \hom X)} \\
	{(Y \hom Z) \tensor ((X \hom Y) \tensor (W \hom X))} \&\& {W \hom Z} \\
	\& {(Y \hom Z) \tensor (W \hom Y)}
	\arrow["{\alpha_{Y \hom Z, X \hom Y, W \hom X}}"', from=1-1, to=2-1]
	\arrow["{(Y \hom Z) \tensor M_{W, X, Y}}"', from=2-1, to=3-2]
	\arrow["{M_{W, Y, Z}}"', from=3-2, to=2-3]
        \arrow["{M_{X, Y, Z} \tensor (W \hom X)}"{yshift=0.5ex}, from=1-1, to=1-3]
	\arrow["{M_{W, X, Z}}", from=1-3, to=2-3]
      \end{tikzcd}
    \]
\end{definition}

The objects $X \hom Y$ are the hom-objects of the enrichment; the
natural transformation $j$ gives the identities and $M$ is composition.
Since we do not assume that $\V$ is symmetric, the order of composition
is very important.
The bijection condition in the definition means that the enriched
category has $\C$ as the underlying ordinary category.

\begin{example}
  Recall from \cref{example:copowers} that every monoidal category $\V$
  acts on itself with $\Gamma \act X = \Gamma \tensor X$.
  If each ${{-}} \tensor X$ has a right adjoint $X \lolly {{-}}$,
  then $\lolly$ forms an enrichment of $\V$ over
  itself.
  (This fact is an instance of \cref{prop:action-enrichment} below.)
  This includes for example the category of actions $\Act \M$ of any
  set-theoretic monoid $\M$, with the Cartesian monoidal structure
  (because $\Act \M$ has exponentials).
  Another example is the smash product $\tensor$ of pointed sets, for
  which $X \lolly Y$ is $\PSet(X, Y)$, with distinguished element
  $\lambda x.\,\star$.

  For $\V = \Set$ with the Cartesian monoidal structure, every (locally
  small) $\C$
  has a unique enrichment.
  The object $X \hom Y \in \Set$ is the hom-set $\C(X, Y)$; the
  structural laws $\jj_X : 1 \to \C(X, X)$ and
  $M_{X, Y, Z} : \C(Y, Z) \times \C(X, Y) \to \C(X, Z)$ are the
  identities and composition.
  If $\C$ has copowers over $\Set$ (which form an action of $\Set$ on
  $\C$ as in \cref{example:copowers}), then
  ${-} \copower X \dashv \C(X, {-})$.
\end{example}

\begin{definition}
  If $\C$ and $\D$ are enriched categories, then an \emph{enriched functor} $F : \C \to \D$ consists of an object $FX \in \D$
  for each object $X \in \C$ and a morphism
  $\fmap_{X, Y} : X \hom_\C Y \to FX \hom_\D FY$ for each $X, Y \in \C$,
  such that the following diagrams commute: 
  \[
    \begin{tikzcd}
      I \arrow[r, "\jj_X"]
      \arrow[rd, "\jj_{FX}"']
      &
      X \hom_\C X
      \arrow[d, "\fmap_{X, X}"] \\
      &
      FX \hom_\D FX
    \end{tikzcd}
    \qquad
    \begin{tikzcd}[column sep=huge]
      (Y \hom_\C Z) \tensor (X \hom_\C Y)
      \arrow[r, "M_{X, Y, Z}"]
      \arrow[d, "\fmap_{Y, Z} \tensor \fmap_{X, Y}"'] &
      X \hom_\C Z
      \arrow[d, "\fmap_{X, Z}"] \\
      (FY \hom_\D FZ) \tensor (FX \hom_\D FY)
      \arrow[r, "M_{FX, FY, FZ}"'] &
      FX \hom_\D FZ
    \end{tikzcd}
  \]
  If $F, G : \C \to \D$ are enriched functors, an
  \emph{enriched natural transformation} $\tau : F \natto G$ consists
  of a $\D$-morphism $\tau_x : FX \to GX$ for each $X \in \C$, such that
  the following diagram commutes: 
  \[
    \begin{tikzcd}[column sep=large]
      X \hom_\C Y
      \arrow[r, "\fmap_{X, Y}"]
      \arrow[d, "\fmap_{X, Y}"'] &
      FX \hom_\D FY
      \arrow[d, "FX \hom \tau_Y"] \\
      GX \hom_\D GY
      \arrow[r, "\tau_X \hom GY"'] &
      FX \hom_\D GY
    \end{tikzcd}
  \]
\end{definition}
In the case of $\V = \Set$ with the Cartesian monoidal structure,
enriched functors and natural transformations are just the same as
ordinary functors and natural transformations.  This is a counterpart
to the fact that ordinary functors and natural transformations are
uniquely strong with respect to $\copower$ (if $\D$ has copowers).

The connection between enrichment and strength is the following.
\begin{proposition}\label{prop:action-enrichment}
  Suppose, for each $X \in \C$, a functor ${-} \act X : \V \to \C$ with
  a right adjoint $X \hom {-} : \C \to \V$.
  Also suppose that ${-} \tensor \Gamma$ has a right adjoint
  $\Gamma \lolly {-} : \V \to \V$ for each $\Gamma \in \V$.%
  \footnotemark{}
  Then there is a bijection between (1) the additional data required for
  $\act$ to form an action of $\V$ on $\C$ and (2) the additional data
  required for $\hom$ to form an enrichment of $\C$ over $\V$ such that
  the morphisms $(\Gamma \act X) \hom Y \to \Gamma \lolly (X \hom Y)$,
  obtained from $M$ by transposition, are isomorphisms.
  If both $\C$ and $\D$ are equipped with an action and an enrichment related by
  this bijection, then strong functors $\C \to \D$ are in bijection
  with enriched functors $\C \to \D$.
  Moreover, if $F, G$ are strong, then natural transformations
  $\tau : \under F \natto \under G$ are strong if and only if they
  are enriched.
\end{proposition}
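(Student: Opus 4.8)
The plan is to run the entire correspondence off the two given families of adjunctions. Writing $\mathrm{ev}$ and $\mathrm{coev}$ for the counits and units of ${-}\actC X \dashv X\homC{-}$ (and similarly in $\D$), and using the transposition bijections $\C(\Gamma\act X, Y)\iso\V(\Gamma, X\hom Y)$ and $\V(\Delta\tensor\Gamma,\Xi)\iso\V(\Delta,\Gamma\lolly\Xi)$, which are natural in every argument, I would translate each piece of action data into enrichment data and back.

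For part (1) I first fix the dictionary on the unit. I take $\jj_X : I \to X\hom X$ to be the transpose of $\lambda_X : I\act X\to X$, and recover $\lambda_X$ as the transpose of $\jj_X$; since $\jY_{X,Y}f = (X\hom f)\compose\jj_X$ is exactly the transpose of $f\compose\lambda_X$, the requirement that each $\jY_{X,Y}$ be a bijection holds precisely when each $\lambda_X$ is an isomorphism. For composition I relate $M$ and $\alpha$: the named morphism $(\Gamma\act X)\hom Y\to\Gamma\lolly(X\hom Y)$ is the transpose of $((\Gamma\act X)\hom Y)\tensor\Gamma\to X\hom Y$ obtained by tensoring with $\mathrm{coev} : \Gamma\to X\hom(\Gamma\act X)$ and then applying $M$. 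Conversely, $\alpha_{\Gamma',\Gamma,X}$ is forced by Yoneda: for every $W$, morphisms out of $\Gamma'\act(\Gamma\act X)$ are identified with $\V(\Gamma',(\Gamma\act X)\hom W)$, morphisms out of $(\Gamma'\tensor\Gamma)\act X$ with $\V(\Gamma'\tensor\Gamma, X\hom W)$, and the two sides match naturally in $W$ exactly when the named morphism is an isomorphism (via $(\Gamma\act X)\hom W\iso\Gamma\lolly(X\hom W)$ followed by the $\lolly$-adjunction). Thus the iso condition in (2) is precisely what makes $\alpha$ exist as an isomorphism, and this is the heart of the data bijection.

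With the dictionary fixed, I would check that the coherence axioms transfer: the action triangle transposes to the enrichment unit triangle, and the action pentagon for $\alpha$ transposes to the associativity pentagon for $M$. Each is a diagram chase using only naturality of the two transpositions and the triangle identities. Parts (2) and (3) are then the same construction one level up. Given a strength $\str_{\Gamma,X}$ I define $\fmap_{X,Y} : X\homC Y\to FX\homD FY$ as the transpose of $F(\mathrm{ev})\compose\str_{X\homC Y, X} : (X\homC Y)\actD FX\to FY$, and recover $\str_{\Gamma,X}$ as the transpose of $\fmap_{X,\Gamma\actC X}\compose\mathrm{coev} : \Gamma\to FX\homD F(\Gamma\actC X)$; the triangle identities make these mutually inverse. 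The two strength axioms then transpose to the two enriched-functor axioms, and for part (3) the strong-naturality square of \cref{prop:strong-functors-strength} transposes to the enriched-naturality square.

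The main obstacle is not any single step but keeping the associativity comparison honest: verifying that the nested transpositions defining $M$ from $\alpha$ genuinely carry the action pentagon to the composition pentagon. This is the one intricate diagram, and I would organize it---along with all the other coherence checks---using the calculus of mates, viewing $\jj, M, \lambda, \alpha$ (and $\fmap, \str$) as mates of one another under the fixed adjunctions, so that preservation of the axioms follows from functoriality of the mates correspondence with respect to pasting rather than from a raw computation.
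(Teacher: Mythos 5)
Your proposal is correct and follows essentially the same route as the paper's proof: transposition under the two adjunctions plus the Yoneda lemma to set up the dictionary $\lambda \leftrightarrow \jj$ (with invertibility of $\lambda$ matching the bijectivity of $\jY$), $\alpha \leftrightarrow M$ (with invertibility of $\alpha$ matching the isomorphism condition on $(\Gamma \act X) \hom Y \to \Gamma \lolly (X \hom Y)$), and $\str \leftrightarrow \fmap$, with the coherence axioms transferred along these correspondences. Your suggestion to organize the coherence checks via the calculus of mates is a reasonable way to make precise what the paper leaves as ``each of the coherence laws of an action corresponds to one of the laws of an enrichment,'' but it is a presentational refinement rather than a different argument.
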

This proposition is not new.
A proof is given for the more general case of enrichment in a bicategory
by Gordon and Power~\cite{gordon1997enrichment}.
Janelidze and Kelly~\cite{janelidze2001note} also give a proof of the
first part of this proposition for enrichment in a monoidal category;
they describe the construction of the enrichment as
``often-rediscovered folklore''.
We sketch the proof here.
\footnotetext{%
  We do not claim that it is \emph{necessary} for $\lolly$ to exist in
  order to connect strength and enrichment, but the statement of this
  proposition is complicated without $\lolly$.
}
\begin{proof}
  It is a standard fact about adjunctions that making
  $\act$ into a bifunctor is equivalent to making
  $\hom$ into a bifunctor.
  By transposition, morphisms $\lambda_X : I \act X \to X$
  are in bijection with morphisms $\jj_X : I \to X \hom X$, and
  $\lambda_X$ is an isomorphism exactly when
  $\jY_{X,Y} : \C(X, Y) \to \V(I, X \hom Y)$ is a
  bijection for all $Y$.
  Families of morphisms
  $
    \alpha_{\Gamma', \Gamma, X}
      : (\Gamma' \tensor \Gamma) \act X
      \to \Gamma' \act (\Gamma \act X)
  $
  natural in $\Gamma, \Gamma'$ are in bijection with
  families of morphisms
  $M_{X,Y,Z} : (Y \hom Z) \tensor (X \hom Y) \to X \hom Z$
  natural in $Y, Z$ by the Yoneda lemma and transposition,
  and $\alpha_{\Gamma, \Gamma', X}$ is invertible for all $\Gamma,
  \Gamma'$ exactly when the morphisms
  $(\Gamma \act X) \hom Y \to \Gamma \lolly (X \hom Y)$
  induced by $M$ are invertible for all $\Gamma, Y$.
  Each of the coherence laws of an action corresponds to one of the laws
  of an enrichment.

  Each strong functor $F : \C \to \D$ comes with functions
  $\C(\Gamma \actC X, Y) \to \D(\Gamma \actD FX, FY)$ natural in
  $\Gamma$.
  By transposition, natural transformations of this type
  are in bijection with $\Gamma$-natural transformations
  $\V(\Gamma, X \hom_\C Y) \to \V(\Gamma, FX \hom_\D FY)$, hence, by the
  Yoneda lemma, with morphisms $X \hom_\C Y \to FX \hom_\D FY$.
  The axioms of enriched and strong functors transfer along this
  bijection, as do strong and enriched naturality.
\end{proof}

\begin{remark}
  There is a more conceptual (and more technical) proof, which we
  outline.
  Wood~\cite{wood1976indical} shows that the 2-category of categories
  enriched over $\V$ embeds fully faithfully into that of what he calls
  \emph{large $\V$-categories}.\footnote{Following Levy~\cite{Lev:locgc}, we prefer to call them \emph{locally} $\V$-\emph{graded categories}.}
  Modulo size issues, these are
  categories enriched over $[\V^\op, \Set]$ with the convolution
  monoidal structure.
  There is a similar embedding of categories equipped with actions of
  $\V$ in large $\V$-categories.
  By characterizing the images of these embeddings, it is possible to
  transfer data between the action perspective and enrichment
  perspective under the assumptions of \cref{prop:action-enrichment}.
  This also works for the powered categories of \cref{sec:powering}.
  Large $\V$-categories then provide a perspective on strength that
  strictly subsumes all of the three perspectives we consider here.
  The \emph{locally indexed categories} used by Levy~\cite{levy2001call}
  and Egger et al.~\cite{egger2009enriching} for strength with respect
  to Cartesian products are similar (but not quite identical) to large
  $\V$-categories; the $[\V^\op, \Set]$ perspective is used by
  Melli\`es~\cite{mellies2012parametric}.
\end{remark}

One of the advantages of considering enrichment is that the concept of
\emph{enriched monad} (corresponding to strong monad) admits a particularly
lightweight definition.
\begin{definition}
   If $\C$ is an enriched category, then an \emph{enriched monad} on $\C$ consists of an object $TX \in \C$
   and morphism $\eta_X : X \to TX$ for each $X \in \C$, and a morphism
   $\bind_{X, Y} : X \hom TY \to TX \hom TY$ for each $X, Y \in \C$,
   such that the following diagrams commute:
   \[
     \begin{tikzcd}[column sep=tiny, row sep=3.5em]
       X {\hom} TY
       \arrow[r, "\phantom{\jj_X}", phantom]
       \arrow[rd, equals]
       \arrow[d, "\bind_{X, Y}"'] &
       \phantom{X \hom X}
       \\
       TX {\hom} TY
       \arrow[r, "\eta_X \hom TY"'{yshift=-0.5ex}] &
       X {\hom} TY
     \end{tikzcd}
     \hspace{-1.4em}
     \begin{tikzcd}[column sep=small, row sep=1.1em]
       I
       \arrow[rdd, "\jj_{TX}"']
       \arrow[r, "\jj_X"] &
       X {\hom} X
       \arrow[d, "X \hom \eta_X"] \\
       &
       X {\hom} TX
       \arrow[d, "\bind_{X, X}"] \\
       &
       TX {\hom} TX
     \end{tikzcd}
     \hspace{1.6em}
     \begin{tikzcd}[row sep=1em,column sep=1.5em]
       (Y {\hom} TZ) \tensor (X {\hom} TY)
       \arrow[r, "\bind_{Y, Z} \tensor (X {\hom} TY)"{inner sep=0,yshift=1ex}]
       \arrow[dd, "\bind_{Y, Z} \tensor \bind_{X, Y}"'] &
       (TY {\hom} TZ) \tensor (X {\hom} TY)
       \arrow[d, "M_{X, TY, TZ}"] \\
       &
       X {\hom} TZ
       \arrow[d, "\bind_{X, Z}"] \\
       (TY {\hom} TZ) \tensor (TX {\hom} TY)
       \arrow[r, "M_{TX, TY, TZ}"'] &
       TX {\hom} TZ
     \end{tikzcd}
   \]
   An \emph{enriched monad morphism} $\tau : \S \to \T$ consists of
   a morphism $\tau_X : SX \to TX$ for each $X \in \C$, such that
   $\tau_X \compose \eta_X = \eta_X$, and such that the following diagram
   commutes:
   \[
     \begin{tikzcd}[row sep=tiny]
       X \hom SY
       \arrow[r, "\bind_{X,Y}"]
       \arrow[dd, "X \hom \tau_Y"'] &
       SX \hom SY
       \arrow[rd, "SX \hom \tau_Y"]
       \\
       &&
       SX \hom TY \\
       X \hom TY
       \arrow[r, "\bind_{X,Y}"'] &
       TX \hom TY
       \arrow[ru, "\tau_X \hom TY"']
     \end{tikzcd}
   \]
\end{definition}

\begin{remark}
  In Haskell (and similar languages), the \verb+Monad+ type class asks
  for a polymorphic function \verb+(>>=) :: m a -> (a -> m b) -> m b+.
  This corresponds to the natural transformation $\bind$ above (with
  arguments reversed).
  Instances of \verb+Monad+ are actually enriched monads (and by the
  following proposition, strong monads), not ordinary monads,
  which is why there is no need to provide a strength in Haskell.
  The same goes for the \verb+Functor+ type class:
  \verb+fmap :: (a -> b) -> (f a -> f b)+ is enriched functoriality, not
  ordinary functoriality.
\end{remark}

\begin{proposition}
  Assume the setting of \cref{prop:action-enrichment},
  with an action of $\V$ on $\C$ that has a corresponding
  enrichment.
  There is a bijection between strong monads on $\C$ and
  enriched monads on $\C$, and this bijection preserves the underlying
  ordinary monads.
  If $\T, \S$ are strong and $\tau : \under \T \to \under \S$ is a
  monad morphism, then $\tau$ is strong if and only if it is
  enriched.
\end{proposition}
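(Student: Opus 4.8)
The plan is to prove this as the monad-level analogue of \cref{prop:action-enrichment}, carried out by exactly the same transposition. I would work with the primary (Kleisli-extension) definition of strong monad and the $\bind$-definition of enriched monad. Since the unit data $\eta_X : X \to TX$ is literally the same in both presentations, only the extension operators need to be matched. A strong monad supplies, for each $X, Y \in \C$, a family $\extend{({-})} : \C(\Gamma \act X, TY) \to \C(\Gamma \act TX, TY)$ natural in $\Gamma \in \V$. Transposing along the adjunctions $({-}) \act X \dashv X \hom ({-})$ and $({-}) \act TX \dashv TX \hom ({-})$ turns this into a family $\V(\Gamma, X \hom TY) \to \V(\Gamma, TX \hom TY)$ natural in $\Gamma$, and the Yoneda lemma collapses it to a single morphism $\bind_{X, Y} : X \hom TY \to TX \hom TY$. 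First I would check that this assignment is a bijection between $\Gamma$-natural Kleisli extension operators and morphisms $\bind$; note that neither $\extend{({-})}$ nor $\bind$ is required to be natural in $X$ or $Y$, so the correspondence is clean.

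Next I would verify that the three strong-monad axioms correspond under this bijection to the three enriched-monad diagrams. Since $\lambda$ transposes to $\jj$ and $\extend{({-})}$ transposes to $\bind$, the axiom $\extend{(\eta_X \compose \lambda_X)} = \lambda_{TX}$ becomes the unit triangle $\bind_{X, X} \compose (X \hom \eta_X) \compose \jj_X = \jj_{TX}$; precomposition with $\Gamma \act \eta_X$ transposes to postcomposition with $\eta_X \hom TY$, so the right unit law $\extend f \compose (\Gamma \act \eta_X) = f$ becomes $(\eta_X \hom TY) \compose \bind_{X, Y} = \id$; and the associativity axiom becomes the composition square built from $M_{X, TY, TZ}$ and $M_{TX, TY, TZ}$. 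Each is a routine transposition computation.

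For the claim that the bijection preserves underlying ordinary monads, I would observe that the constructions on both sides are built from the same adjunctions and the same $\eta$, so they are compatible with the functor-level bijection of \cref{prop:action-enrichment}. Concretely, the equivalent characterizations of strong monad given above present a strong monad as a strong functor $\hat T$ over $T$ for which $\eta$ and $\mu$ are strong; \cref{prop:action-enrichment} sends such strong functors to enriched functors over $T$ and identifies strong naturality of $\eta, \mu$ with enriched naturality, while leaving the underlying $(T, \eta, \mu)$ untouched throughout. Hence the underlying ordinary monad is preserved.

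Finally, for monad morphisms I would reduce to the natural-transformation half of \cref{prop:action-enrichment}. Since $\tau$ is assumed to be a morphism of the underlying ordinary monads, the characterization above lets me replace ``$\tau$ is a strong monad morphism'' by ``$\tau$ is a strong natural transformation of the underlying strong functors''; likewise the $\eta$- and $\bind$-compatibility defining an enriched monad morphism transpose into the $\eta$- and $\extend{({-})}$-compatibility of a strong monad morphism, reducing ``$\tau$ is an enriched monad morphism'' to ``$\tau$ is an enriched natural transformation''. \cref{prop:action-enrichment} then equates strong and enriched natural transformations, closing the argument. I expect the main obstacle to be the associativity axiom: matching the transpose of the strong associativity law, which threads through $\alpha_{\Gamma', \Gamma, TX}$ and two applications of $\extend{({-})}$, against the enriched composition square is where the transposition bookkeeping is heaviest, and it is the one place where the precise compatibility of $\alpha$ with $M$ from \cref{prop:action-enrichment} must be used in full.
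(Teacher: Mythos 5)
Your proposal is correct and matches the paper's intended argument: the paper states this proposition without an explicit proof, precisely because it follows by the same transposition-plus-Yoneda technique used in the proof of \cref{prop:action-enrichment}, which is exactly what you carry out (collapsing the $\Gamma$-natural Kleisli extension to $\bind_{X,Y}$, matching the three remaining strong-monad equations to the three enriched-monad diagrams, and transposing the monad-morphism conditions). Your axiom-by-axiom correspondences, including the use of the $\alpha$--$M$ compatibility in the associativity case, are accurate.
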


Note that the definition of enriched monad involves only 3
equations whereas the definition of strong monad has 4 and the
definition of monad with a
strength has as many as 12 (7 equations of a monad and 5 equations of
a strength for a monad).

\section{Powering}\label{sec:powering}

We now turn to the final perspective on strength that we consider.
Enrichment fits into the picture by considering right adjoints
$X \hom {-}$ to ${-} \act X$.
If instead the functors $\Gamma \act {-} : \C \to \C$ have right
adjoints $\Gamma \power {-} : \C \to \C$, then they form a
\emph{powering} of $\C$ over $\V$ in the following sense; we call
$\Gamma \power X$ the \emph{power} of $\Gamma$ and $X$.\footnotemark{}
\footnotetext{%
  The terminology here comes from the fact that, just as
  $\Gamma \act X$ is a copower (tensor) in the enriched sense when
  $\hom$ exists, $\Gamma \power X$ is a power (cotensor) in the enriched
  sense.}

\begin{definition}
  A \emph{powering} of a category $\C$ over a monoidal category $(\V, I, \tensor)$ is a functor
  $\power : \V^\op \times \C \to \C$ equipped with natural isomorphisms
  \[
    \iR_X : X \to I \power X
    \qquad
    \pR_{\Gamma, \Gamma', X}
      : \Gamma \power (\Gamma' \power X)
      \to (\Gamma' \tensor \Gamma) \power X
  \]
      satisfying the following coherence conditions:
      \[
\begin{tikzcd}[ampersand replacement=\&]
	{\Gamma \power (I \power X)} \& {(I \tensor \Gamma) \power X} \\
	{\Gamma \power X} \& {\Gamma \power X} \\
	{I \power (\Gamma \power X)} \& {(\Gamma \tensor I) \power X}
	\arrow["{\iR_{\Gamma \power X}}"', from=2-1, to=3-1]
	\arrow["{\pR_{I, \Gamma, X}}"', from=3-1, to=3-2]
	\arrow["{\rho_{\Gamma} \power X}"', from=3-2, to=2-2]
	\arrow[Rightarrow, no head, from=2-1, to=2-2]
	\arrow["{\Gamma \power \iR_X}", from=2-1, to=1-1]
	\arrow["{\lambda_\Gamma \power X}"', from=2-2, to=1-2]
	\arrow["{\pR_{\Gamma, I, X}}", from=1-1, to=1-2]
\end{tikzcd}
\hspace{-1.3em}
\begin{tikzcd}[column sep=-4.5em, row sep=1.6em, ampersand replacement=\&]
  \&\& {\Gamma_3 \power ((\Gamma_1 \tensor \Gamma_2) \power X)} \\
  \mathrlap{\hspace{1.6em}\Gamma_3 \power (\Gamma_2 \power (\Gamma_1 \power X))}
  \phantom{\Gamma_3 \power (\Gamma_2 \power (\Gamma_1 \power X))} \&\&\&\&
  \phantom{((\Gamma_1 \tensor \Gamma_2) \tensor \Gamma_3) \power X}
  \mathllap{((\Gamma_1 \tensor \Gamma_2) \tensor \Gamma_3) \power X\hspace{1.6em}} \\
  \& {(\Gamma_2 \tensor \Gamma_3) \power (\Gamma_1 \power X)} \&\& {(\Gamma_1 \tensor (\Gamma_2 \tensor \Gamma_3)) \power X}
  \arrow["{\Gamma_3 \power \pR_{\Gamma_2, \Gamma_1, X}}", from=2-1, to=1-3]
  \arrow["{\pR_{\Gamma_3, \Gamma_1 \tensor \Gamma_2, X}}", from=1-3, to=2-5]
  \arrow["{\alpha_{\Gamma_1, \Gamma_2, \Gamma_3} \power X}"', from=3-4, to=2-5]
  \arrow["{\pR_{\Gamma_2 \tensor \Gamma_3, \Gamma_1, X}}"'{yshift=-1ex}, from=3-2, to=3-4]
  \arrow["{\pR_{\Gamma_3, \Gamma_2, \Gamma_1 \power X}}"', from=2-1, to=3-2]
\end{tikzcd}
\]
\end{definition}

\begin{example}
  If $\V$ is right closed in the sense that each
  $\Gamma \tensor {-} : \V \to \V$ has a right adjoint
  $\Gamma \lollyR {-} : \V \to \V$, then $\lollyR$ gives a powering of $\V$
  over itself (by \cref{prop:action-powering} below).  This right
  adjoint is naturally isomorphic to $\Gamma \lolly {-}$ exactly
  when $\V$ is symmetric. Even when $\V$ is symmetric, the
  definitions of powered functor and powered monad are different from
  the enriched versions (but they are in bijection).

  If a category $\C$ has small products, then it is powered over $\Set$
  by defining $\Gamma \power X = \linebreak \Gamma \pitchfork X = \prod_{\gamma \in \Gamma} X$.
  If $\C$ also has small coproducts, then we have adjunctions
  $\Gamma \copower {-} \dashv \Gamma \pitchfork {-}$.
\end{example}

We define powered notions of functor and natural transformation
analogous to the strong and enriched notions.
\begin{definition}
  If $\C$ and $\D$ are powered categories, then a \emph{powered functor} $F : \C \to \D$ consists of an object
  $FX \in \D$ for each $X \in \C$, and a $\D$-morphism
  $F\ctx{\Gamma} f : FX \to \Gamma \powerD FY$ for each $\C$-morphism
  $f : X \to \Gamma \powerC Y$ such that
  $F\ctx{\Gamma}$ is natural in $\Gamma \in \V$ 
and
\[
\begin{array}{@{}c@{\quad}l}
  F\ctx{I} \iR_X = \iR_{FX} &\textrm{for $X \in \C$}
\\
    F\ctx{\Gamma' \tensor \Gamma}(\pR_{\Gamma, \Gamma', Z}
      \compose (\Gamma \powerC g) \compose f)
    =
    \pR_{\Gamma, \Gamma', FZ} \compose
      (\Gamma \powerD F\ctx{\Gamma'}g) \compose F\ctx{\Gamma} f
  &\textrm{for $f : X\to \Gamma{\powerC}Y$, $g : Y\to \Gamma'{\powerC}Z$}
\end{array}
\]
  If $F, G : \C \to \D$ are powered functors, then a \emph{powered natural
  transformation} $\tau : F \natto G$ consists of a $\D$-morphism
  $\tau_X : FX \to GX$ for each $X \in \C$ such that
  $
    (\Gamma \powerD \tau_Y) \compose F\ctx{\Gamma} f
    =
    G\ctx{\Gamma} f \compose \tau_X
  $
  for $f : X \to \Gamma \powerC Y$.
\end{definition}
If $F : \C \to \D$ is a powered functor, then we obtain an ordinary
functor $\under F : \C \to \D$ by defining $\under F X = FX$ and
$\under F f = \iR^{-1}_{FY} \compose F\ctx{I} (\iR_Y \compose f)$
for $f : X \to Y$.

Equivalently, a powered functor is an ordinary functor $F$ with a
powering, i.e.\ family of morphisms
$\textrm{pow}_{\Gamma,Y} : F (\Gamma \powerC Y) \to \Gamma \powerD FY$
natural in $\Gamma \in \V$ and $Y \in \C$, subject to two equations. We
will not discuss this definition further.

The relationship between strength and powering is as follows.
Similar to the relationship between strength and enrichment
(\cref{prop:action-enrichment}), this proposition enables us to look at
strength from the perspective of powering.
\begin{proposition}\label{prop:action-powering}
  Suppose for each $\Gamma \in \V$ an adjunction
  $\Gamma \act {-} \dashv \Gamma \power {-} : \C \to \C$.
  There is a bijection between the additional data required for
  $\act$ to form an action of $\V$ on $\C$ and the additional data
  required for $\power$ to form a powering of $\C$ over $\V$.
  If both $\C$ and $\D$ are equipped with an action and a powering related by this
  bijection, then there is a bijection between strong functors and
  powered functors $\C \to \D$; this preserves the underlying ordinary
  functors.
  Under this bijection, natural transformations are strong if and only if
  they are powered.
\end{proposition}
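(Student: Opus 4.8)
The plan is to prove both bijections by transposition across the given family of adjunctions $\Gamma \act {-} \dashv \Gamma \power {-}$, mirroring the proof of \cref{prop:action-enrichment} but using the right adjoints $\Gamma \power {-} : \C \to \C$ in place of the right adjoints ${-} \hom X$. The organizing principle is the mate calculus: natural transformations built from the left adjoints $\Gamma \act {-}$ correspond bijectively, with reversed direction, to natural transformations built from the right adjoints $\Gamma \power {-}$, and this correspondence preserves invertibility and respects pasting, so that coherence laws on one side transfer to coherence laws on the other.

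First I would establish the correspondence of structure. Making $\act$ a bifunctor $\V \times \C \to \C$ is equivalent, by the standard parametrised-adjunction argument, to making $\power$ a bifunctor $\V^\op \times \C \to \C$ (the contravariance in $\V$ arises because a $\V$-morphism $\sigma : \Gamma \to \Gamma'$ induces $\sigma \act {-}$ whose mate runs $\Gamma' \power {-} \to \Gamma \power {-}$). The unitor $\lambda_X : I \act X \to X$ is a natural isomorphism $I \act {-} \iso \Id$; since $\Id$ is right adjoint to itself and $I \power {-}$ is right adjoint to $I \act {-}$, its mate is a natural isomorphism $\iR_X : X \to I \power X$, and this assignment is a bijection. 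For the associator, the left adjoint $(\Gamma' \tensor \Gamma) \act {-}$ has right adjoint $(\Gamma' \tensor \Gamma) \power {-}$, while the composite left adjoint $\Gamma' \act (\Gamma \act {-})$ has right adjoint $\Gamma \power (\Gamma' \power {-})$ (the order of $\Gamma, \Gamma'$ reverses under composition of right adjoints). The mate of $\alpha_{\Gamma', \Gamma, X} : (\Gamma' \tensor \Gamma) \act X \to \Gamma' \act (\Gamma \act X)$ is therefore a morphism $\pR_{\Gamma, \Gamma', X} : \Gamma \power (\Gamma' \power X) \to (\Gamma' \tensor \Gamma) \power X$, matching exactly the direction and indexing in the definition of a powering; naturality transfers, and $\alpha$ is invertible iff its mate $\pR$ is. Each of the two coherence diagrams for an action pastes out of $\lambda, \alpha$ and the structural isomorphisms of $\V$, so applying the (contravariant, pasting-respecting) mate correspondence turns it into precisely the corresponding coherence diagram for a powering.

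Next I would treat the functor correspondence. A strong functor supplies functions $F\ctx{\Gamma} : \C(\Gamma \actC X, Y) \to \D(\Gamma \actD FX, FY)$ natural in $\Gamma$; transposing along $\Gamma \actC {-} \dashv \Gamma \powerC {-}$ in the source and $\Gamma \actD {-} \dashv \Gamma \powerD {-}$ in the target yields functions $\C(X, \Gamma \powerC Y) \to \D(FX, \Gamma \powerD FY)$, which is precisely the data $F\ctx{\Gamma} f : FX \to \Gamma \powerD FY$ of a powered functor. The transposition bijection is natural in all its arguments, so $\Gamma$-naturality is preserved, the object assignment $FX$ is unchanged, and the recovered underlying ordinary functor agrees (the defining formula for $\under F$ in the powered case is the transpose of the one in the strong case). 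The $\lambda$-axiom transposes to the $\iR$-axiom and the $\alpha$-composition axiom to the $\pR$-composition axiom, using the structure correspondence just established; this is where the calculation is concentrated, but it is routine once the transpositions are set up. Finally, strong naturality $\tau_Y \compose F\ctx{\Gamma} f = G\ctx{\Gamma} f \compose (\Gamma \actD \tau_X)$ transposes, via the triangle identities, to the powered-naturality condition $(\Gamma \powerD \tau_Y) \compose F\ctx{\Gamma} f = G\ctx{\Gamma} f \compose \tau_X$, so strong and powered natural transformations coincide.

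The main obstacle I anticipate is bookkeeping rather than anything conceptual: keeping the variance straight (powering is contravariant in $\V$, and the index order of $\Gamma, \Gamma'$ swaps in the $\alpha \leftrightarrow \pR$ correspondence), and confirming that naturality in $\Gamma$ genuinely survives transposition, since the adjunctions themselves are indexed by $\Gamma$. Both are handled cleanly by phrasing everything through the mate calculus for the $\V$-indexed family of adjunctions, so that invertibility, pasting, and hence all coherence and functor axioms transfer uniformly; the only real work is checking that the two functor axioms match up correctly under transposition.
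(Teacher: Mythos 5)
Your proposal is correct and takes exactly the route the paper intends: the paper states \cref{prop:action-powering} without a separate proof, relying on the transposition/mate argument you give, which mirrors the proof it does spell out for \cref{prop:action-enrichment} (hom-set transposition plus Yoneda, with coherence laws transferring along the correspondence). Your bookkeeping also matches the paper's definitions precisely --- the contravariance of $\power$ in $\V$, the mate of $\lambda$ being $\iR : X \to I \power X$, and the index swap making the mate of $\alpha_{\Gamma',\Gamma,X}$ land as $\pR_{\Gamma,\Gamma',X} : \Gamma \power (\Gamma' \power X) \to (\Gamma' \tensor \Gamma) \power X$ --- so nothing further is needed.
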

We can connect enrichment and powering by combining this proposition with
\cref{prop:action-enrichment}, but also directly by a natural isomorphism
$\V(\Gamma, X \hom Y) \iso \C(X, \Gamma \power Y)$; we omit the precise
statement.

\begin{definition}
  If $\C$ is a powered category, then a \emph{powered monad} $\T = (T, \eta, \extend{({-})})$ consists of an
  object $TX \in \C$ and morphism $\eta_X : X \to TX$ for each
  $X \in \C$, and a morphism $\pextend f : TX \to \Gamma \power TY$ for
  each $f : X \to \Gamma \power TY$, such that $\pextend{(-)}$
  is natural in $\Gamma$ and
\[
\begin{array}{@{}c@{\quad}l}
    \pextend{(\iR_{TX} \compose \eta_X)}
    = \iR_{TX}
    & \textrm{for~} X \in \C
    \\
    \pextend f \compose \eta_X
    = f
    & \textrm{for~} f : X \to \Gamma \power TY
    \\
    \pR_{\Gamma, \Gamma', TZ}
      \compose (\Gamma {\power} \pextend g) \compose \pextend f
    =
    \pextend{(\pR_{\Gamma, \Gamma', TZ}
      \compose (\Gamma {\power} \pextend g) \compose f)}
    & \textrm{for~} f : X \to \Gamma {\power} TY$, $g : Y\to \Gamma' {\power} TZ
\end{array}
\]
  If $\S$ and $\T$ are powered monads, then a \emph{powered monad morphism}
  $\tau : \S \to \T$ consists of a morphism $\tau_X : SX \to TX$ for
  each $X \in \C$ such that $\tau_X \compose \eta_X = \eta_X$ for each
  $X \in \C$ and such that
  $
    (\Gamma \power \tau_Y) \compose \pextend f
      = \pextend{((\Gamma \power \tau_Y) \compose f)} \compose \tau_X
  $
  for each $f : X \to \Gamma \power SY$.
\end{definition}

If $\T = (T, \eta, \pextend{({-})})$ is a powered monad, then $T$ forms
a powered functor $\C \to \C$ by defining
$T\ctx{\Gamma} f = \pextend{((\Gamma \power \eta_Y) \compose f)}$ for
each $f : X \to \Gamma \power Y$.
There is also a monad $\under{\T} = (\under T, \eta, \mu)$, with
multiplication
$\mu_X = \iR_{TX}^{-1} \compose \pextend{\iR_{TX}}$.

As for the action perspective, the powering perspective gives rise to
several equivalent notions of monad, given in the following
proposition.
We emphasize the characterization (3) below in particular.
This characterization is useful when the monad $\T$ is constructed so
that the Eilenberg-Moore category matches some particular category
(for example, the models of an algebraic theory); in which case one way
of making $\T$ into a strong monad is to first obtain a powered monad
using (3), and then obtain a strong monad using
\cref{prop:action-powering}.
\begin{proposition}\label{prop:powered-monads}
  For each monad $\T = (T, \eta, \mu)$ on a powered category $\C$, there is a bijection
  between:
  \begin{enumerate}
    \item powered monads $\hat \T$ such that $\under{\hat \T} = \T$;
    \item powered functors $\hat T$ such that $\under{\hat T} = T$ and
      such that $\eta$ and
      $\mu$ are powered natural
      transformations $\Id \natto \hat T$ and 
$\hat T \fcomp \hat T \natto \hat T$;
    \item liftings of $\power$ to the Eilenberg-Moore category of
      $\T$, i.e.\ powerings $\power_\T$ of $\TAlg{\T}$ over $\V$, such
      that the following diagram commutes (up to equality, where $U_\T$ is the forgetful functor).
      \[
        \begin{tikzcd}
          \V^\op \times \TAlg{\T}
          \arrow[r, "\power_\T"]
          \arrow[d, "\V^\op \times U_\T"'] &
          \TAlg{\T}
          \arrow[d, "U_\T"] \\
          \V^\op \times \C
          \arrow[r, "\power"'] &
          \C
        \end{tikzcd}
      \]
  \end{enumerate}
  If $\S, \T$ are monads equipped with the equivalent data from
  this bijection, then the following conditions on monad morphisms
  $\tau : \S \to \T$ are equivalent: (1) $\tau$ is a powered monad
  morphism $\hat \S \to \hat \T$; (2) $\tau$ is a powered natural
  transformation $\hat S \natto \hat T$; (3) $\tau$ makes the
  diagram below commute.
\[
        \begin{tikzcd}
          \V^\op \times \TAlg{\T}
          \arrow[r, "\power_\T"]
          \arrow[d, "\V^\op \times \TAlg{\tau}"'] &
          \TAlg{\T}
          \arrow[d, "\TAlg{\tau}"] \\
          \V^\op \times \TAlg{\S}
          \arrow[r, "\power_\S"'] &
          \TAlg{\S}
        \end{tikzcd}
\]        
\end{proposition}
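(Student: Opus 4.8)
The plan is to establish the two bijections $(1)\leftrightarrow(2)$ and $(2)\leftrightarrow(3)$ separately, closely mirroring the proof of the four-part, action-based characterization of strong monads, but systematically replacing the Kleisli category by the Eilenberg--Moore category $\TAlg{\T}$. The reason the variance flips is exactly that powering is built from the right adjoints $\Gamma\power{-}$: liftings of such a functor interact with $\TAlg{\T}$, where liftings of $\Gamma\act{-}$ interacted with $\Kl{\T}$. For $(1)\leftrightarrow(2)$ I would argue just as for strong monads. A powered monad induces a powered functor by the formula $T\ctx{\Gamma} f = \pextend{((\Gamma\power\eta_Y)\compose f)}$ recorded above, and one checks that $\eta$ and $\mu$ then become powered natural transformations. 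Conversely, given a powered functor $\hat T$ with $\under{\hat T}=T$ for which $\eta$ and $\mu$ are powered, I set $\pextend f = (\Gamma\power\mu_Y)\compose\hat T\ctx{\Gamma} f$ for $f:X\to\Gamma\power TY$. That these assignments are mutually inverse, and that the three powered-monad equations decompose into the monad laws together with powered naturality of $\eta$ and $\mu$, is routine bookkeeping of the same shape as \cref{prop:strong-functors-strength}.

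The heart of the argument is $(2)\leftrightarrow(3)$. Here I would pass through the equivalent presentation of a powered functor as an ordinary functor $T$ equipped with a powering $\mathrm{pow}_{\Gamma,X}:T(\Gamma\power X)\to\Gamma\power TX$, natural in $\Gamma$ and $X$ (noted in the text). For each fixed $\Gamma$, the family $\mathrm{pow}_{\Gamma,{-}}$ is a natural transformation $T\compose(\Gamma\power{-})\natto(\Gamma\power{-})\compose T$, and the conditions that $\eta$ and $\mu$ are powered are precisely the two axioms that make it the functor-case of a distributive law, i.e.\ a lifting of the endofunctor $\Gamma\power{-}$ to $\TAlg{\T}$. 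By the standard correspondence between such liftings and these natural transformations, this gives the bijection on objects: the lift is $\Gamma\power_\T(A,a)=(\Gamma\power A,\,(\Gamma\power a)\compose\mathrm{pow}_{\Gamma,A})$, whose algebra laws follow from powered naturality of $\eta$ and $\mu$ and from $(A,a)$ being an algebra. Conversely, from a lifting $\power_\T$ I recover $\mathrm{pow}_{\Gamma,X}$ as the unique $\under T$-algebra morphism $(T(\Gamma\power X),\mu_{\Gamma\power X})\to\Gamma\power_\T(TX,\mu_X)$ extending $\Gamma\power\eta_X$ along the universal map of the free algebra on $\Gamma\power X$; a short computation using naturality of $\mathrm{pow}$ and the monad unit law $\mu\compose T\eta=\id$ confirms this inverts the previous construction. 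What remains beyond this functor-level correspondence is to thread through the extra structure that is specific to \emph{powerings}: the requirement that $\mathrm{pow}$ be natural in $\Gamma\in\V$ corresponds to $\power_\T$ being functorial in the $\V^\op$ variable, and the two powered-functor coherence equations (for $\iR$ and for $\pR$) correspond exactly to the isomorphisms $\iR$ and $\pR$ lifting to algebra isomorphisms in $\TAlg{\T}$, so that $\power_\T$ is a genuine powering. The commuting square with $U_\T$ holds strictly by construction.

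For the morphism part I would unwind each bijection. A monad morphism $\tau:\S\to\T$ induces $\TAlg{\tau}:\TAlg{\T}\to\TAlg{\S}$; the equivalence of (1) being a powered monad morphism and (2) being a powered natural transformation $\hat S\natto\hat T$ is essentially definitional once the extension and functor data are related as in $(1)\leftrightarrow(2)$, and the equivalence of (2) with (3), the condition that $\TAlg{\tau}$ commute with the two powerings, follows from naturality of the object-level $(2)\leftrightarrow(3)$ correspondence together with the recovery formula for $\mathrm{pow}$ in terms of free algebras.

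I expect the main obstacle to be the coherence bookkeeping in $(2)\leftrightarrow(3)$: not the core lifting bijection, which is standard, but the verification that the structure maps produced in $(2)\to(3)$ are genuine algebras and that the powering coherences $\iR$ and $\pR$ lift to algebra \emph{isomorphisms}. This is more delicate than the Kleisli case, where one merely composes Kleisli arrows and exploits the counit $\id_{TX}$, because here each check is a condition on algebra structure maps and must match the several coherence diagrams of a powering against those of a powered functor. Once that is done, transferring the result to strong monads via \cref{prop:action-powering} is immediate.
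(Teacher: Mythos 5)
Your proposal is correct and follows essentially the route the paper intends: the paper states this proposition without a written proof, leaving it as the Eilenberg--Moore dual of its proof of the four-way strong-monad characterization, and your argument matches that template step for step --- the $(1)\leftrightarrow(2)$ formulas $T\ctx{\Gamma} f = \pextend{((\Gamma \power \eta_Y) \compose f)}$ and $\pextend f = (\Gamma \power \mu_Y) \compose \hat T\ctx{\Gamma} f$ are exactly the paper's, and your lifting $\Gamma \power_\T (A,a) = (\Gamma \power A,\, (\Gamma \power a) \compose \mathrm{pow}_{\Gamma,A})$ is precisely the formula the paper itself deploys in its proof of the free-monads-on-powered-endofunctors proposition. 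Your recovery of $\mathrm{pow}$ via free algebras and the check using $\mu \compose T\eta = \id$ are the standard lifting/distributive-law correspondence the paper relies on implicitly, so there is no gap.
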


\subsection{Free monads on powered endofunctors}
\label{sec:powering-free}

As an application of \cref{prop:powered-monads}, we show that, unlike in the
case of strength with respect to an action, if $\T$ is an algebraically free monad on a powered
functor, then $\T$ is powered in a canonical way.
In light of \cref{prop:action-powering},
this explains why algebraic freeness suffices to construct a (left) 
strength with respect to a monoidal right-closed structure.
\begin{proposition}
  If $F$ is a powered endofunctor on a powered category $\C$ and $\T$ is the
  algebraically free monad on $\under F$, then $\T$ forms a powered
  monad. 

  If $\act$ is an action of $\V$ on $\C$, related to $\power$ as in
  \cref{prop:action-powering},
  and $F$ is a strong endofunctor, then every free
  $\under F$-algebra is strongly free.
\end{proposition}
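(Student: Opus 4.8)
The plan is to prove the two assertions separately: the first via characterization~(3) of \cref{prop:powered-monads}, and the second by transposing the strongly-free universal property across the adjunctions of \cref{prop:action-powering}.

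For the first part, recall that a powered endofunctor $F$ may equivalently be presented as the ordinary functor $\under F$ together with a family $\textrm{pow}_{\Gamma,Y} : \under F(\Gamma \power Y) \to \Gamma \power \under F Y$ natural in $\Gamma$ and $Y$ and compatible with $\iR$ and $\pR$. First I would use this data to power the category $\FAlg{\under F}$ of $\under F$-algebras over $\V$, setting
\[
  \Gamma \power_F (A, a) = \bigl(\Gamma \power A,\; (\Gamma \power a) \compose \textrm{pow}_{\Gamma, A}\bigr),
\]
and defining $\power_F$ on morphisms via $\power$ on underlying morphisms. Naturality of $\textrm{pow}$ ensures that this sends algebra morphisms to algebra morphisms and is functorial. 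The structural isomorphisms of $\power_F$ are those of $\power$, so the work is to check that $\iR_{(A,a)}$ and $\pR_{\Gamma,\Gamma',(A,a)}$ are $\under F$-algebra morphisms, which is exactly where the two compatibility equations of the powered functor $F$ (relating $\textrm{pow}$ to $\iR$ and to $\pR$) are consumed. Once these maps are algebra morphisms, the two coherence diagrams for a powering commute in $\FAlg{\under F}$ because they already commute in $\C$ and the forgetful functor is faithful. Since $\T$ is algebraically free on $\under F$, there is an isomorphism $\TAlg{\T} \iso \FAlg{\under F}$ over $\C$; transporting $\power_F$ along it yields a powering $\power_\T$ of $\TAlg{\T}$ commuting with $U_\T$. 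Applying the implication $(3) \Rightarrow (1)$ of \cref{prop:powered-monads} then produces the required powered monad structure on $\T$.

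For the second part, suppose $(A, f, a)$ is the free $\under F$-algebra on $X$, and fix $\Gamma \in \V$, an $\under F$-algebra $(B, b)$, and a morphism $g : \Gamma \act X \to B$. I would transpose $g$ across $\Gamma \act {-} \dashv \Gamma \power {-}$ to $\bar g : X \to \Gamma \power B$, and equip $\Gamma \power B$ with the algebra structure $\Gamma \power_F(B, b)$ built exactly as above (the strength of $F$ supplies the powered-functor data by \cref{prop:action-powering}). Ordinary freeness of $(A, f, a)$ then gives a unique $\under F$-algebra morphism $\bar h : (A, a) \to \Gamma \power_F(B, b)$ with $\bar h \compose f = \bar g$, and I would take $h : \Gamma \act A \to B$ to be its transpose. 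The claim is that the two triangles of \cref{def:strongly-free} for $h$ are precisely the transposes of the two defining conditions on $\bar h$.

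The main obstacle is this last correspondence, specifically the equivalence between the strongly-free square $b \compose F\ctx{\Gamma} h = h \compose (\Gamma \act a)$ and the algebra-morphism condition on $\bar h$. The left triangle is routine: by naturality of the transposition, $h \compose (\Gamma \act f) = g$ is equivalent to $\bar h \compose f = \bar g$. For the square I would use that, under the bijection of \cref{prop:action-powering}, the strong structure $F\ctx{\Gamma} h : \Gamma \act \under F A \to \under F B$ transposes to the powered structure $\textrm{pow}_{\Gamma, B} \compose \under F \bar h : \under F A \to \Gamma \power \under F B$. Transposing both sides of the square---using that $v \compose ({-})$ and $({-}) \compose (\Gamma \act u)$ transpose to $(\Gamma \power v) \compose ({-})$ and $({-}) \compose u$ respectively---turns it into
\[
  (\Gamma \power b) \compose \textrm{pow}_{\Gamma, B} \compose \under F \bar h
    = \bar h \compose a,
\]
which is exactly the statement that $\bar h$ is a morphism into $\Gamma \power_F(B, b)$. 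Existence and uniqueness of $h$ thus transfer directly from those of $\bar h$, completing the argument. I expect verifying the transposition identity for $F\ctx{\Gamma} h$---keeping the variance of $\power$ and the naturality of the adjunction units straight---to be the most delicate bookkeeping.
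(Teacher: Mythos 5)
Your proposal is correct, and the two halves relate to the paper differently. For the first claim you follow the paper's proof exactly: the paper lifts the powering to $\FAlg{\under F}$ via $\Gamma \power_{\under F} (A, a) = (\Gamma \power A,\ (\Gamma \power a) \compose F\ctx{\Gamma} \id_{\Gamma \power A})$ --- note $F\ctx{\Gamma}\id_{\Gamma\power A}$ is literally your $\textrm{pow}_{\Gamma,A}$ --- transports it along $\FAlg{\under F} \iso \TAlg{\T}$, and invokes characterization (3) of \cref{prop:powered-monads}; your extra verifications (that $\iR$ and $\pR$ are algebra morphisms, consuming the two powered-functor equations) are exactly the details the paper leaves implicit. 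For the second claim your route genuinely differs. The paper first upgrades the algebraically free monad $\T$ to a strong monad (via the first part and \cref{prop:action-powering}) and then builds the required maps $\Gamma \act TX \to B$ from the strength $\Gamma \act TX \to T(\Gamma \act X)$ together with freeness of $T(\Gamma \act X)$, leaving the strongly-free diagrams and uniqueness unverified in print. You instead transpose the universal property of \cref{def:strongly-free} directly across $\Gamma \act {-} \dashv \Gamma \power {-}$, reducing it to ordinary freeness of the single algebra $(A, f, a)$ mapped into the lifted algebra $\Gamma \power_F(B, b)$; your key identity, that the strong $F\ctx{\Gamma} h$ transposes to $\textrm{pow}_{\Gamma,B} \compose \under F\, \bar h$, is precisely how the bijection of \cref{prop:action-powering} is defined, so the argument is sound. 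What your route buys: it never needs the monad $\T$ (nor the first part of the proposition) to exist, so it proves the stronger statement that \emph{each individual} free $\under F$-algebra is strongly free, and it delivers existence and uniqueness of $h$ in one stroke rather than requiring a separate uniqueness check. What the paper's route buys: it makes explicit how the strongly-free structure arises from the monad's strength, which is the form in which the result is actually used (the Kleisli extension of the strong monad), whereas you would still need a short argument to recover that description from your transposed $h$.
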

\begin{proof}
  If $\T$ is algebraically free there is an isomorphism
  $\FAlg{\under F} \iso \TAlg{\T}$ that commutes with the forgetful
  functors.
  By \cref{prop:powered-monads}, to make $\T$ into a powered monad it
  therefore suffices to show that the powering $\power$ lifts to
  $\FAlg{\under F}$.
  To do this, define
  $\power_{\under F} : \V^\op\times \FAlg{\under F} \to \FAlg{\under F}$
  on objects by
  $
    \Gamma \power_{\under F} (A, a)
    =
    (\Gamma \power A, (\Gamma \power a)
      \compose F\ctx{\Gamma} \id_{\Gamma \power A})
  $
  and on morphisms by $\sigma \power_{\under F} f = \sigma \power f$.

  Given an action $\act$ as in \cref{prop:action-powering},
  strong functors and monads are in bijection with powered functors and
  monads, so the algebraically free monad $\T$ then forms a strong
  monad.
  To construct the unique maps $\Gamma \act TX \to A$ of
  \cref{def:strongly-free}, we can therefore use
  the strength $\Gamma \act TX \to T(\Gamma \act X)$ and the fact that
  $T(\Gamma \act X)$ is free on $\Gamma \act X$.
\end{proof}

\section{Conclusion}

We have shown and commented on a number of different equivalent
definitions of strong functor and of strong monad, and explained how and
why they arise. These definitions differ significantly in the amount
of data and the equations they involve, and they serve different
applications. We presented some sufficient conditions for uniqueness
and existence of strengths for all functors, in particular the
condition of weak functional completeness, which is new as far as we
know, and some examples of absence and multiplicity of strengths,
which we crafted to demonstrate that these conditions are not
necessary.

There are some questions we could not settle; for example, we could
neither find a Cartesian category with multiple weak functional
completeness structures nor show that there is none.  We would
like to identify interesting examples of unique existence, absence and
multiplicity of strengths for non-symmetric monoidal categories and
non-self-actions.

A finer analysis of strength could proceed from a generally
non-symmetric non-monoidal closed category $\V$, this being the
minimal structure needed for self-enrichment of $\V$. A further
possible direction of refinement would be to work with skew
monoidal/closed categories and actions,
cf. \cite{szlachanyi2017tensor}. There are no immediate indications of
obstacles, but we would also like to find interesting applications of
this level of generality.

\paragraph{Acknowledgements} Both authors were supported by the
Icelandic Research Fund project grant no. 196323-053, T.U. also by the
Estonian Research Council team grant no. PRG1210.

\bibliographystyle{eptcs}
\bibliography{msfp22}


\appendix

\section{Biactions, bistrong functors, commutative monads}
\label{sec:biactions}

\subsection{Biactions, bistrong functors}

If a monoidal category $\V$ acts on a category $\C$ from both the left
and the right, the two actions can be required to agree with each
other.

\begin{definition}
  A \emph{right action} $\ract$ of a monoidal category $(\V, I, \tensor)$ on a category $\C$ consists of a functor $\ract : \C \times \V \to \C$ and natural
  isomorphisms
  \[
    \rho_X : X \to X \ract I
    \qquad
    \alpha_{X, \Gamma,\Gamma'}
      : (X \ract \Gamma) \ract \Gamma'
      \to X \ract (\Gamma \tensor \Gamma')
  \]
      satisfying the following coherence conditions:
     \[
\begin{tikzcd}[ampersand replacement=\&]
	{(X \ract \Gamma) \ract I} \& {X \ract (\Gamma \tensor I)} \\
	{X \ract \Gamma} \& {X \ract \Gamma} \\
	{(X\ract I) \ract \Gamma} \& {X \ract (I \tensor \Gamma)}
	\arrow["{\alpha_{X, \Gamma, I}}", from=1-1, to=1-2]
	\arrow[Rightarrow, no head, from=2-1, to=2-2]
	\arrow["{\alpha_{X, I, \Gamma}}"', from=3-1, to=3-2]
	\arrow["{\rho_X \ract I}"', from=2-1, to=3-1]
	\arrow["{\rho_{X \ract \Gamma}}", from=2-1, to=1-1]
	\arrow["{X \ract \lambda_\Gamma}"', from=3-2, to=2-2]
	\arrow["{X \ract \rho_\Gamma}"', from=2-2, to=1-2]
\end{tikzcd}
        \hspace{-1.4em}
        \begin{tikzcd}[column sep=-3.2em, row sep=1.6em, ampersand replacement=\&]
          \&\&
          (X \ract \Gamma_3) \ract (\Gamma_2 \tensor \Gamma_1)
          \arrow[rrd, "\alpha_{X, \Gamma_3, \Gamma_2\tensor \Gamma_1}"]
          \\
          \mathrlap{\hspace{1.6em}((X \ract \Gamma_3) \ract \Gamma_2) \ract \Gamma_1}
          \phantom{((X \ract \Gamma_3) \ract \Gamma_2) \ract \Gamma_1}
          \arrow[rru, "\alpha_{X \ract \Gamma_3, \Gamma_2, \Gamma_1}"]
          \arrow[rd, "\alpha_{X, \Gamma_3, \Gamma_2} \ract \Gamma_1"'] \&\&\&\&
          \phantom{X \ract (\Gamma_3 \tensor (\Gamma_2 \tensor \Gamma_1))}
          \mathllap{X \ract (\Gamma_3 \tensor (\Gamma_2 \tensor \Gamma_1))\hspace{1.6em}} \\
          \&
          (X \ract (\Gamma_3 \tensor \Gamma_2)) \ract \Gamma_1
          \arrow[rr, "\alpha_{X, \Gamma_3\tensor \Gamma_2, \Gamma_1}"'{yshift=-0.5ex}]
          \&\&
          X \ract ((\Gamma_3 \tensor \Gamma_2) \tensor \Gamma_1)
          \arrow[ru, "X \ract \alpha_{\Gamma_3, \Gamma_2, \Gamma_1}"']
        \end{tikzcd}
      \]
\end{definition}

\begin{definition}
  A \emph{biaction} of a monoidal category $(\V, I, \tensor)$ on a category $\C$
  consists of a left action
  $\lact : \V \times \C \to \C$, a right action
  $\ract : \C \times \V \to \C$, and a natural isomorphism
  $
    \alpha_{\Gamma, X, \Delta} : (\Gamma \lact X) \ract \Delta
      \to \Gamma \lact (X \ract \Delta)
  $
  such that 
\[
\begin{tikzcd}
(I \act X) \ract \Delta \arrow[r, "\alpha_{I,X,\Delta}"]
   \arrow[dr, "\lambda_X \ract \Delta"']
& I \act (X \ract \Delta) \arrow[d, "\lambda_{X \ract \Delta}"]
\\
& X \ract \Delta
\end{tikzcd}
\quad
\begin{tikzcd}
\Gamma \act X 
   \arrow[d, "\rho_{\Gamma \act X}"']
   \arrow[dr, "\Gamma \act \rho_X"]
\\
(\Gamma \act X) \ract I \arrow[r, "\alpha_{\Gamma,X,I}"']
& \Gamma \act (X \ract I) 
\end{tikzcd}
\]
\[
\begin{tikzcd}
((\Gamma \tensor \Gamma') \act X) \ract \Delta 
   \arrow[rr, "\alpha_{\Gamma \tensor \Gamma',X,\Delta}"]
   \arrow[d, "\alpha_{\Gamma,\Gamma',X} \ract \Delta"']
& & (\Gamma \tensor \Gamma') \act (X \ract \Delta) 
   \arrow[d, "\alpha_{\Gamma,\Gamma',X \ract \Delta}"]
\\
(\Gamma \act (\Gamma' \act X)) \ract \Delta 
         \arrow[r, "\alpha_{\Gamma,\Gamma' \act X,\Delta}"']
& \Gamma \act ((\Gamma' \act X) \ract \Delta)
         \arrow[r, "\Gamma \act \alpha_{\Gamma',X,\Delta}"']
  &  \Gamma \act (\Gamma' \act (X \ract \Delta))
\end{tikzcd}
\]
\[
\begin{tikzcd}
((\Gamma \act X) \ract \Delta) \ract \Delta'
         \arrow[r, "\alpha_{\Gamma,X,\Delta} \ract \Delta'"]
   \arrow[d, "\alpha_{\Gamma \act X, \Delta, \Delta'}"']
& (\Gamma \act (X \ract \Delta)) \ract \Delta'
         \arrow[r, "\alpha_{\Gamma,X \ract \Delta, \Delta'}"]
  & \Gamma \act ((X \ract \Delta) \ract \Delta')
   \arrow[d, "\Gamma \act \alpha_{X, \Delta, \Delta'}"]
\\
(\Gamma \act X) \ract (\Delta \tensor \Delta')
   \arrow[rr, "\alpha_{\Gamma,X,\Delta \tensor \Delta'}"']
& & \Gamma \act (X \ract (\Delta \tensor \Delta')) 
\end{tikzcd}
\]
\end{definition}

An example is $\C = \V$ and $\act = \ract = \otimes$.

In a biaction situation, if a functor has both a left strength and a
right strength, these can be required to cohere as follows.

\begin{definition}
  Suppose a biaction of a monoidal category $\V$ on a category $\D$. A
  \emph{bistrength} for a functor $F : \C \to \D$ is a pair of a left
  strength
  $\lstr_{\Gamma, X} : \Gamma \lactD FX \to F(\Gamma \lactC X)$ and a
  right strength
  $\rstr_{X, \Delta} : FX \ractD \Delta \to F(X \ractC \Delta)$ such
  that
  \[\begin{tikzcd}[column sep=large]
	{(\Gamma \lactD FX) \ractD \Delta} & {F(\Gamma \lactC X) \ractD \Delta} & {F((\Gamma \lactC X) \ractC \Delta)} \\
	{\Gamma \lactD (FX \ractD \Delta)} & {\Gamma \lactD F(X \ractC \Delta)} & {F (\Gamma \lactC (X \ractC \Delta))}
	\arrow["{\Gamma \lactD \rstr_{X,\Delta}}"', from=2-1, to=2-2]
	\arrow["\lstr_{\Gamma,X \ractC \Delta}"', from=2-2, to=2-3]
	\arrow["\alpha_{\Gamma,FX,\Delta}"', from=1-1, to=2-1]
	\arrow["{\lstr_{\Gamma,X} \ractD \Delta}", from=1-1, to=1-2]
	\arrow["\rstr_{\Gamma \actC X, \Delta}", from=1-2, to=1-3]
	\arrow["F\alpha_{\Gamma,X,\Delta}", from=1-3, to=2-3]
\end{tikzcd}
\]
A natural transformation between two bistrong functors is bistrong 
if it is both left strong and right strong. 
\end{definition}

Consider the case $\C = \D = \V$ and $\act = \ract = \otimes$.  If
$\V$ is symmetric, with braiding $c_{X,Y} : X \tensor Y \to Y \tensor X$, then any left strength $\lstr$ of a functor $F$
induces a right strength $\rstr$ via
$\rstr_{X,\Delta} = F c_{\Delta,X} \compose \lstr_{\Delta,X} \compose
c_{FX,\Delta}$.
The two strengths together form a bistrength.
But the right strength does not have to be related to the left
strength like this, not even when $\V$ is a Cartesian category. For
example, take $\V$ to be the category of pointed sets with its
Cartesian structure. The identity functor is bistrong with
$\lstr_{\Gamma,X} (\gamma, x) = (\gamma, x)$ and
$\rstr_{X,\Delta} (x, \delta) = (x, \star)$.

\subsection{Commutative monads}

Kock~\cite{kock1970monads} studied what he named commutative monads for the case of a
symmetric monoidal category. His commutative monads were left-strong
monads subject to an additional equational condition.

Symmetry is in fact not needed. The concept of commutative monad makes
sense for a general monoidal category $\V$; Kock's condition can be
formulated for any bistrength for the tensor as a biaction (where the
right strength need not in general be defined in terms of the left
strength like we did above).

\begin{definition}
  Suppose a monoidal category $(\V, I, \tensor)$. A \emph{commutative
    monad} is a monad $\T = (T, \eta, \mu)$ with a bistrength
  $(\lstr, \rstr)$ of $T$ (wrt.\ $\tensor$ as a biaction of $\V$ on
  itself) such that $\eta$, $\mu$ are bistrong and moreover the
  following diagram commutes:
\[
\begin{tikzcd}
T X \tensor T Y
   \arrow[r, "\lstr_{TX,Y}"]
   \arrow[d, "\rstr_{X, TY}"']
& T (TX \tensor Y)  \arrow[r, "T \rstr_{X,Y}"]
&  T (T (X \tensor Y))
   \arrow[d, "\mu_{X \tensor Y}"]
\\
T (X \tensor T Y)
         \arrow[r, "T \lstr_{X,Y}"']
& T (T (X \tensor Y))
         \arrow[r, "\mu_{X \tensor Y}"']
  & T (X \tensor Y)
\end{tikzcd}
\]

\end{definition}
Commutative monads in this sense are exactly the same as lax
monoidal monads.
Even when $\V$ is symmetric, the bistrength of a commutative monad does
not need to be defined by symmetry, so this notion of commutative monad
(i.e.\ lax monoidal monad) is strictly more general than
Kock's.
For example, consider the writer monad $\WrM \M$ on $\Act \M$ from
\cref{example:writer}, where $\M$ is any commutative monoid.
From the two strengths given there, we can make a bistrength
\[
  \lstr_{\Gamma, X} (\gamma, (x, m)) = ((\gamma, x), m)
  \qquad
  \rstr_{X, \Delta} ((x, m), \delta) = ((x, \delta * m), m)
\]
and $\WrM \M$ equipped with this bistrength is a commutative monad.
Kock's commutative monads are the same as symmetric lax monoidal monads.

\end{document}